\keywords{Automaton, probabilistic words, context-free grammar, membership
problem}
\newcommand{\pal}{\mathrm{PAL}}
\newcommand{\pale}{\pal_{>0}}
\newcounter{mycount}
\newcommand{\dom}{\mathrm{dom}}
\newcommand{\problemtitle}[1]{\gdef\@problemtitle{#1}}%
\newcommand{\probleminput}[1]{\gdef\@probleminput{#1}}%
\newcommand{\problemquestion}[1]{\gdef\@problemquestion{#1}}%
  \par\addvspace{.5\baselineskip}
  \par\addvspace{.5\baselineskip}
\newcommand{\NN}{\mathbb{N}}
\newcommand{\ZZ}{\mathbb{Z}}
\newcommand{\LL}{\text{L}}
\newcommand{\RR}{\text{R}}
\renewcommand{\prob}{\text{\#pM}}
\newcommand{\probone}[1]{\prob(#1)}
\renewcommand{\SS}{\mathrm{S}}
\newcommand{\prim}{\text{prim}}
\newcommand{\compl}{\complement}
\newcommand{\calA}{\mathcal{A}}
\newcommand{\nruns}{\#\text{runs}}
\newcommand{\restr}[2]{#1_{|#2}}
\theoremstyle{plain} %
\begin{document}

\title{On the Complexity of Language Membership for Probabilistic Words}
\author[Antoine Amarilli]{Antoine Amarilli\lmcsorcid{0000-0002-7977-4441}}[a]
\author[Mikaël Monet]{Mikaël Monet\lmcsorcid{0000-0002-6158-4607}}[a]
\author[Paul Raphaël]{Paul Raphaël\lmcsorcid{0000-0002-1825-0097}}[b]
\author[Sylvain Salvati]{Sylvain Salvati\lmcsorcid{0000-0002-1825-0097}}[a]

\address{Univ. Lille, Inria, CNRS, Centrale Lille, UMR 9189 CRIStAL, France}	%
\email{a3nm@a3nm.net, mikael.monet@inria.fr, sylvain.salvati@inria.fr}  %

\address{École normale supérieure, Paris, France}	%
\email{paul.raphael@ens.psl.eu}  %

\begin{abstract}
  \noindent We study the membership problem to context-free languages~$L$ (CFLs) on
  \emph{probabilistic words}, that specify for each position a probability
  distribution on the letters (assuming independence across positions). Our task
  is to compute, given a probabilistic word, what is the probability that a word
  drawn according to the distribution belongs to~$L$. This
  problem generalizes the problem of counting how many words of length $n$
  belong to~$L$, or of counting how many
  completions of a partial word belong to~$L$.

  We show that this problem is in
  polynomial time for unambiguous context-free languages (uCFLs), but can be \#P-hard
  already for unions of two linear uCFLs. More generally, we show that the
  problem is in polynomial time for so-called \emph{poly-slicewise-unambiguous languages},
  where given a length~$n$ we can tractably compute an uCFL for the words of
  length~$n$ in the language. This class includes some inherently ambiguous
  languages, and implies the tractability of \emph{bounded CFLs} and of
  languages recognized by \emph{unambiguous polynomial-time counter automata}; but we show
  that the problem can be \#P-hard for nondeterministic counter automata, even
  for Parikh automata with a single counter. We then introduce classes of
  circuits from knowledge compilation which we use for tractable counting, and
  show that this covers the tractability of poly-slicewise-unambiguous languages
  and of some CFLs that are not poly-slicewise-unambiguous. Extending these
  circuits with negation further allows us to show tractability for the language
  of primitive words, and for the language of concatenations of two palindromes. 
  We also show that, when the target language is given as input, the
  probabilistic membership problem is intractable already when the 
  language asks whether there is a factor that matches one partial word;
  however, it
  becomes tractable when the language is given as a $k$-ambiguous
  automaton for any fixed $k>0$, contrasting with the case of context-free
  grammars (CFGs).
  We finally show the conditional undecidability of the meta-problem that asks,
  given a CFG, whether the probabilistic membership problem for that CFG is
  tractable or \#P-hard.
\end{abstract}

\maketitle

\section{Introduction}
\label{sec:intro}
One of the most fundamental problems in formal languages is
the \emph{membership problem} of determining whether an input word belongs to
a language of interest. The problem is in linear time for regular
languages when given as deterministic finite automata, but complexity increases
for the class of \emph{context-free languages} (CFLs): the best algorithms on an
input word of length~$n$ run in time $O(n^\omega)$ where $\omega$ is the
exponent of Boolean matrix multiplication~\cite{abboud2018if}. Because of the
practical importance of CFL parsing, several classes of CFLs have been
identified with a better complexity. For instance, \emph{unambiguous CFLs} (uCFLs),
which can be recognized by \emph{unambiguous context-free grammars} (uCFGs), can
be parsed in $O(n^2)$~\cite{earley1970efficient}; further, \emph{deterministic CFLs}
can be parsed in linear time.

Beyond the question of membership testing for a single input word, subsequent
works have investigated more general tasks. For instance, one problem is
\emph{counting}: given a CFL $L$ and a
length $n \in \NN$, determine how many words of length~$n$ are in~$L$.
(Counting problems that do not fix the length can also be defined, e.g., the
\emph{coin-flip measure} of~\cite{clemente2020complexity}.)
An FPRAS to \emph{approximately} count $|L \cap \Sigma^n|$ when given $n$ and a CFG
for $L$ was recently claimed
in~\cite{meel2026cfg}, but in the exact setting the task is intractable already for
regular languages presented as nondeterministic
automata (NFAs)~\cite{alvarez1993very}. This counting
problem was also
studied for CFLs in~\cite{bertoni1991complexity} when the target language is
fixed (and the input only specifies the length); we will review this work in
more detail later.
Another task is to \emph{sample} a word of the language of~$L$ uniformly at random:
a polynomial algorithm for uCFGs was shown
in~\cite{hickey1983uniform,mairson1994generating}, and a generalization to
finitely ambiguous CFGs is given in~\cite{bertoni2001random}, along with a
quasi-polynomial time algorithm for almost uniform
sampling~\cite{gore1997quasi}. Other tasks include \emph{enumeration},
i.e., producing the sequence of all words in a CFL~$L$ according to some order
(e.g., the lexicographic order):
see~\cite{semba1981generation,makinen1997lexicographic,domosi2000unusual,costa2015naive};
and \emph{ranking}, i.e., determining how many words in~$L$ are smaller than an
input word~\cite{huynh1988complexity,huynh1990complexity,bertoni1991ranking}.

In this paper, we consider the problem of \emph{language membership for probabilistic words},
for short \emph{probabilistic membership}.
In this 
setting, a \emph{probabilistic word} on an alphabet $\Sigma$ is a sequence $p =
p_1 \cdots p_n$ of length $n$, where each $p_i$ is a probability distribution
on~$\Sigma$. The probabilistic word $p$ is a concise representation of a
probability distribution of words (each of length~$n$) obtained by picking the
$i$-th letter according to the distribution~$p_i$, and assuming independence
across letters. 
Probabilistic words are also called 
\emph{weighted sequences} or \emph{weighted strings} in the literature, with applications
in
bioinformatics~\cite{atteson1998calculating,kleffe1990exact,nicodeme1999motif,lladser2008multiple}
and pattern recognition as well as data mining~\cite{charalampopoulos2020property,radoszewski2020streaming,christodoulakis2004pattern}.
For a fixed language~$L$, the \emph{probabilistic membership}
problem asks us to determine, given a probabilistic word~$p$, the
total probability of the outcomes that belong to~$L$, i.e., the probability that
a word drawn according to~$p$ belongs to~$L$. The problem can be solved naively
by going over all possible outcomes (of which there can be up to $|\Sigma|^n$);
our goal is to understand for which languages~$L$ we can solve this problem
more efficiently, e.g., in polynomial time in~$n$.

The probabilistic membership problem of course generalizes
non-probabilistic membership ,
and it can be seen as a weighted generalization of the
counting problem of~\cite{bertoni1991complexity}; the latter corresponds to the
case of probabilistic words that can only use the uniform distribution
over~$\Sigma$ at every position. 
Moreover, it is
not hard to see that the tractability of probabilistic membership
can be used as a subroutine to allow efficient sampling, enumeration, and
ranking in the radix order (up to polynomial-time overhead). Indeed,
probabilistic membership generalizes the question of counting how many words
of length $n$ start with a given prefix, so that sampling, enumeration, and
ranking can be performed using self-reducibility.
Furthermore, we can also see probabilistic words as a weighted generalization of
so-called \emph{partial words}: these are words on an
alphabet $\Sigma$ augmented with a wildcard~'?' that can be replaced by
any letter of~$\Sigma$. 
Partial words have been studied in their own right
\cite{berstel1999partial,blanchet2007algorithmic,manea2010hard,manea2013hardness,blanchetsadri2009counting},
and
we can see them as the special case of
probabilistic words where each distribution is either uniform or Dirac (i.e.,
with only one possible outcome).
Probabilistic membership to a language~$L$ is thus a generalization of the problem of counting the
completions of partial words that belong to~$L$.

This paper focuses on the probabilistic membership problem,
and
aims at understanding for which CFLs it can be solved in polynomial time.
Borrowing terminology from database
theory~\cite{vardi1982complexity}, we mostly study the problem in \emph{data
complexity}, i.e., we assume that the target language is fixed and measure
complexity only as a function of the input word. Some of our
results, however, are about the \emph{combined complexity} of the
problem, i.e., when the input contains both the word and some
representation of the target language (e.g., as a CFG or NFA). 
Our work does
not achieve a complete characterization of the CFLs that enjoy tractable
probabilistic membership, but we give a hierarchy of sufficient conditions for
tractability of the task (even beyond CFLs), complemented by lower bounds. We
summarize our contributions below.

\paragraph{Contributions and outline}
We give preliminaries and formally introduce the
problem in \cref{sec:problem}. We then give some first
results in \cref{sec:ambiguity}, which are essentially generalizations of the
unweighted counting results of~\cite{bertoni1991complexity}.
Namely, we show that probabilistic membership is tractable for uCFLs (even in
combined complexity), because we
can count the total probability of derivation trees with a variant of
the Cocke-Younger-Kasami (CYK) algorithm.
However, we show that the task can be
intractable (namely, \#P-hard) for some inherently ambiguous CFLs: we give a
simple example of such a CFL, and generalize and extend techniques
from~\cite{bertoni1991complexity} to show hardness even in the case of the union
of two unambiguous \emph{linear CFLs}, i.e., the right-hand-side of each production rule
contains at most one nonterminal.

The next section (\cref{sec:combined}) shows that there are CFLs that are
inherently ambiguous but for which probabilistic membership is nevertheless
tractable. These include, for instance, all \emph{bounded}
CFLs~\cite{ginsburg1964bounded}: these are
known to coincide with the class of \emph{polyslender
CFLs}~\cite{ilie2000characterization}, i.e., those CFLs
containing only polynomially many words of length~$n$. Thus, we can tractably
solve probabilistic membership for bounded CFLs~$L$ (even in combined
complexity)
by simply listing all words
of length~$n$ of~$L$ explicitly and summing their probabilities. Bounded CFLs are incomparable to uCFLs,
as there are non-bounded uCFLs (e.g., palindromes) and also inherently
ambiguous bounded CFLs (e.g., $\{a^n b^m c^m \mid n, m \in \NN\} \cup \{a^n b^n
c^m \mid n, m \in \NN\}$). We give a unifying explanation to the
tractability of uCFLs and bounded CFLs by introducing the class of languages $L$
that are \emph{poly-slicewise-unambiguous} in the sense that, given a length $n
\in \NN$, we can compute in polynomial time in~$n$ a uCFG $G$ recognizing the
words of length~$n$ of~$L$, i.e., recognizing a language $L'$ with $L \cap
\Sigma^n = L' \cap \Sigma^n$. For such languages,
probabilistic membership is tractable simply by reducing to uCFGs. We last introduce a
formalism of \emph{unambiguous polynomial-time counter automata} which are
poly-slicewise-unambiguous: these recognize in particular some CFLs that are
neither bounded nor unambiguous.
We show that probabilistic membership is tractable for such automata, but that
it can be \#P-hard for some
CFLs accepted by very restricted nondeterministic counter automata.

The natural question is then whether poly-slicewise-unambiguity is the unifying
explanation for the tractability of probabilistic membership for CFLs. We answer
in the negative in \cref{sec:circuits} by introducing a framework of
\emph{tractable circuits} from the field of \emph{knowledge
compilation}~\cite{darwiche2002knowledge}, inspired by database
theory~\cite{amarilli2024tractable}. Intuitively, tractable circuits are
circuits that give a factorized representation of a set of solutions 
(intuitively corresponding to outcomes of a probabilistic word) using some
tractable operators: \emph{Cartesian product} (called \emph{decomposable AND} in the
setting of Boolean knowledge compilation circuit classes), and \emph{disjoint
union} (called \emph{deterministic OR} in the Boolean setting). These circuits
are a multivalued analogue to so-called \emph{d-DNNF
circuits}~\cite{darwiche2002knowledge}. The choice of
operators ensures that we can compute in linear time (up to the cost of
arithmetic operations) the total probability of the assignments captured by a
tractable circuit. We show that these circuits can be used to recapture the
tractability of all poly-slicewise-unambiguous languages. What is more, we show
that there are CFLs that are not poly-slicewise-unambiguous but admit tractable
circuits: this is the case of the language recently studied in
\cite{kimelfeld2025formal} whose length-$n$ slices were 
shown~\cite{mengel2025lower} not to admit uCFLs of polynomial size in~$n$.

We then continue our study of tractable circuits in \cref{sec:palindromes} and
extend them by adding a complementation operator (corresponding to negation in
the Boolean sense). This follows the observation from probabilistic
databases~\cite{monet2020solving} that counting the satisfying valuations of
\mbox{d-DNNF} circuits %
can be tractably achieved even in the presence of negation. Further, negation is
very useful to encode some inclusion-exclusion-based counting arguments (through
the full extent to which this is true is not yet understood
\cite{amarilli2024non}). Tractable circuits with negation can be used in
particular to generalize the obvious fact
that probabilistic membership is tractable for the complements of all tractable
languages studied so far (which are generally not CFLs). We use those circuits
to further show tractability for some other languages: we introduce the
technique with the language of \emph{primitive words} (whose context-freeness is
a long-standing open problem~\cite{ito2014context}), and further apply it to
the language of the \emph{concatenation of palindromes}, a
non-linear CFL of unbounded degree of ambiguity~\cite{crestin1972langage}. Our
tractability result for probabilistic membership for this language generalizes
an earlier result on unweighted counting~\cite{kemp1982number}.

We then revisit in \cref{sec:morecombined} the question of combined complexity,
i.e., the complexity of probabilistic membership when we are given as input both
a probabilistic word and a representation of the target language. We focus 
on the case where the target language is a regular languages represented as
an NFA. Probabilistic membership
is intractable in combined complexity already in this
setting when we allow arbitrary NFAs as input~\cite{alvarez1993very}, but in this section we show two additional
results when making assumptions on the input automaton. First, we show that
hardness already holds in the case where the target language is in a very
restricted class of \emph{partial pattern matching} problems, i.e., we ask for
the probability of the outcomes that contain a factor matching a
partial pattern given as input.
This result nicely complements another existing hardness result from the bioinformatics literature~\cite{ma2007complexity}; we discuss this in more details in \cref{sec:morecombined}.
Second, we show that,
by contrast, for any fixed constant $k>0$ the probabilistic membership problem can be
tractably solved in combined complexity when the input NFA is assumed to be
\emph{$k$-ambiguous}. This contrasts with the case of context-free grammars,
as we showed in \cref{sec:ambiguity} that probabilistic membership is
intractable for 2-ambiguous CFGs even in data complexity.

We finally conclude the paper in \cref{sec:conclusion} and survey directions for
future work; we also mention there the undecidability of finding out,
given a CFG, whether it admits PTIME probabilistic membership.

This paper is the extended version of the conference
article~\cite{amarilli2026complexity}. Compared to the conference version, the
present paper contains the complete proofs of the results in the article, we added
\cref{sec:morecombined} with additional results on the combined complexity
of the problem on NFAs, and we added pointers to related work from the field of
bioinformatics that we were not previously aware
of (such as~\cite{kleffe1990exact,atteson1998calculating,ma2007complexity}).

\section{Preliminaries and Problem Statement}
\label{sec:problem}
\paragraph{Words and languages}
We fix a non-empty finite set of symbols $\Sigma$ called an \emph{alphabet} and
whose elements are called \emph{letters}. 
We let $\Sigma^*$ be the set of all \emph{words} over~$\Sigma$, which are finite
sequences of letters of~$\Sigma$.
We write $\epsilon$ for the empty word.
The \emph{mirror image} of $w = a_1 \cdots a_n \in \Sigma^*$ is $w^R \coloneq
a_n \cdots a_1$; we call $w$ a \emph{palindrome} if $w = w^R$.
A \emph{language} over $\Sigma$ is a subset of $\Sigma^*$. %
We omit the definition of standard language operations, e.g., concatenation,
Kleene star, etc.

\paragraph{Grammars}
 A \emph{context-free grammar} (CFG) over $\Sigma$
 is a tuple
$\Gamma = (N, S, P)$ where
$N$ is a non-empty finite set of symbols called \emph{nonterminals}, 
$S \in N$ is the \emph{axiom},
and $P \subseteq N \times (\Sigma \cup N)^*$ is a set of \emph{production
rules}, with each production
$(U, \alpha) \in P$ written as $U \rightarrow \alpha$.
We call the letters of~$\Sigma$ \emph{terminals}, and write them in lowercase.
We say that $\Gamma$ is a \emph{linear CFG} if the right-hand-side of every
production contains at most one nonterminal.
The \emph{size}~of $\Gamma$, written~$|\Gamma|$, is simply $|N|$
plus the total size of the production rules.

We omit the standard formal definitions of \emph{derivations} and \emph{parse
trees} for CFGs. A CFG~$\Gamma$ \emph{recognizes} the language $\LL(\Gamma)$ formed of the
words $w \in \Sigma^*$ on which $\Gamma$ admits a parse tree. We say that
$\Gamma$ is \emph{unambiguous} (or a \emph{uCFG}) if on every word it admits at
most one parse tree; otherwise $\Gamma$ is \emph{ambiguous}. We call $\Gamma$ \emph{$k$-ambiguous} for $k\in \NN$ if on
every word it admits at most $k$ parse trees. A language $L$ is a \emph{context-free
language} (CFL) if there is a CFG $\Gamma$ with $\LL(\Gamma) = L$. It is an
\emph{unambiguous CFL} (uCFL) if $\Gamma$ can be taken to be a uCFG (but
note that there may also be ambiguous CFGs recognizing $L$). We define
\emph{unambiguous linear CFLs} and 
\emph{\(k\)-ambiguous linear CFLs}
in the expected way.
A CFL $L$ is called \emph{inherently ambiguous} if it is not a uCFL, meaning
that every CFG recognizing~$L$ is ambiguous.

\paragraph{Probabilistic words}
We now define the notion of \emph{probabilistic words} on an
alphabet~$\Sigma$:

\begin{definition}
  A \emph{probability distribution} $p$ on a finite set $S$ is a function from~$S$ to
  $[0,1]$ such that $\sum_{s \in S} p(s) = 1$.
  We call $p$ \emph{uniform} if we have 
  $p(s) \coloneq 1/|S|$ for all $s \in S$, and call~$p$ the
  \emph{Dirac distribution on $s \in S$} if we have
  $p(s) \coloneq 1$ (hence $p(s') = 0$ for all $s' \neq s$).
  A \emph{probabilistic word} is a tuple $p =(p_1,\ldots,p_n)$ of probability
  distributions on $\Sigma$, with~$n$ being the \emph{length} of~$p$. The semantics of
  $p$ is that it represents a probability distribution $p \colon \Sigma^* \to
  [0,1]$, 
  also written $p$ by abuse of notation. Intuitively, $p$ gives
  probability~$0$
  to words of length different from~$n$, and otherwise the probability is obtained by
  multiplying the probabilities given by the distributions to the letters that
  occur at each position.
  Formally, for $w = w_1 \cdots w_m$ in $\Sigma^*$, we have:
  $p(w) \coloneq 0$ if $m \neq n$, and
  $p(w) \coloneq \prod_{k=1}^n p_k(w_k)$ if $m = n$.
\end{definition}

Concatenating probabilistic words means concatenating the
tuples: this is associative and the neutral element is
the empty probabilistic word (the empty tuple).

\paragraph{Probabilistic membership problem}
We now define the problem that we study, called \emph{probabilistic
membership}.
Given a language $L$ on an alphabet $\Sigma$, and a probabilistic word~$p$, we
abuse notation and write $p(L)$ to mean the 
probability that a random word of~$\Sigma^*$ drawn according to~$p$ will belong
to the language~$L$. Formally, we have: $p(L) \coloneq
\sum_{w \in L} p(w)$.
The \emph{probabilistic membership problem} then asks:
given a language $L$ and a probabilistic word~$p$, compute
$p(L)$. Here, probabilities of the input word are given as rational numbers $\frac{r}{q}$ for $(r,q)\in \mathbb{N} \times (\mathbb{N} \setminus \{0\})$, with $r$ and $q$ encoded in binary.
We focus on the \emph{data complexity} perspective,
where the target $L$ is fixed: this defines the \emph{probabilistic membership problem for $L$}, written $\probone{L}$,
with complexity measured as a function of~$p$.
However, some of our tractability results 
hold in \emph{combined complexity},
where 
the input is 
$p$ along with a CFG $\Gamma$ with $\LL(\Gamma) = L$.
We also revisit the question of combined complexity in its own right in
\cref{sec:morecombined}.
Let us now illustrate some special cases of the probabilistic membership problem:

\begin{itemize}
  \item For each $u \in \Sigma^*$, the \emph{Dirac probabilistic word} $u' =
    (p_1, \ldots, p_{|u|})$ is the one that ensures that $u'(u) = 1$, namely,
    $p_i$ is
    the Dirac distribution on~$u_i$ for each $1 \leq i \leq |u|$. We may abuse
    notation and identify $u'$ with~$u$.
    Probabilistic membership to a language~$L$ thus
    generalizes the standard (non-probabilistic) membership problem to~$L$.
  \item For each $n \in \NN$ the \emph{uniform probabilistic word} of length~$n$
    is $p_n = (p_1, \ldots, p_n)$ where each~$p_i$ is the uniform
    distribution on~$\Sigma$. Its outcomes are precisely the words of
    $\Sigma^n$, each with probability $1/|\Sigma|^n$. Probabilistic membership
    to~$L$ thus generalizes (up to renormalization by an easily-computable factor
    of~$|\Sigma|^n$) the unweighted counting problem studied
    in~\cite{bertoni1991complexity}: given an integer~$n \in \NN$, compute the
    cardinality $|L \cap \Sigma^n|$.
  \item Probabilistic words are also a weighted generalization
of \emph{partial words}~\cite{berstel1999partial}. A \emph{partial word} is a
word $p$ over the alphabet $\Sigma \cup \{?\}$, for $?$ a fresh wildcard. The
semantics of $p$ is that it represents a set of possible words over $\Sigma^*$ obtained
by considering all possible \emph{completions} of~$p$, i.e., all possible ways
to replace each occurrence of $?$ in~$p$ by a
letter of~$\Sigma$. Note that all completions have length $|p|$. A partial word
$p$ can be translated to a probabilistic word $q = (q_1, \ldots, q_{|p|})$
where for each $1 \leq i \leq |p|$ we set $q_i$ as the Dirac on~$p_i$ if $p_i
\in \Sigma$ or as the uniform distribution if $p_i = {?}$.
Probabilistic membership to~$L$ thus generalizes the problem of counting the
number of completions of~$p$ that are in~$L$.
In fact, all our hardness results already hold for a variant of
this problem, that we define next.
\end{itemize}

\paragraph{Domain-constrained 
  completion counting problem}

All hardness problems mentioned in the paper will be shown on a slight
generalization of the 
problem of counting the completion of partial words:

\begin{definition}
  \label{def:dcccp}
  Let $\Sigma$ be an alphabet, let $L$ be a language over~$\Sigma$, and let $?$ be a
  fresh symbol not in $\Sigma$. The \emph{completion counting problem} (CCP) for
  the language~$L$ is the following counting problem: given a partial word $u$ 
(i.e., a word over $\Sigma \cup \{?\}$), count the number of
  completions of~$u$ that are in~$L$ (i.e., the number of words obtained
by substituting the occurrences of $?$ in $p$ by letters of~$\Sigma$ to obtain a
word in~$L$).

  Let $\Sigma' \subseteq \Sigma$ be a subalphabet.
  The \emph{domain-constrained 
  completion counting problem} (DCCCP) for~$L$ and~$\Sigma'$ is the following
  counting problem: given a partial word $u$, count the number of
  \emph{$\Sigma'$-completions} of~$u$ that are in~$L$, where a
  \emph{$\Sigma'$-completion} is a completion of~$u$ where each occurrence of
  $?$ must be replaced by a letter of~$\Sigma'$.
\end{definition}

In other words, the completion counting problem for partial words in the
usual sense (i.e., CCP) for a language~$L$ amounts to the DCCCP problem for~$L$
and~$\Sigma' = \Sigma$. All of our hardness results for $\prob$ with languages
$L$ will be shown as hardness results for DCCCP for the same language $L$ and
with some subalphabet $\Sigma'\subseteq \Sigma$ of cardinality~2. (We remark
that, by contrast, our results do not imply the intractability of the CCP
problem, as it is less general than DCCCP.)

\paragraph{Model and problem statement}
Recall that FP is the class of function problems that can be solved by a
deterministic polynomial-time Turing machine, and that
\#P is the class of
counting problems that can be expressed as the number of accepting runs of a
nondeterministic polynomial-time Turing machine. When seeing the output of FP problems as
an integer, we have $FP \subseteq \#$P.
Further, $\text{FP}^{\#\text{P}}$ is the class
of function problems that can be solved in deterministic polynomial time with
access to a \#P oracle: we use this class simply because the answer to our
problems are formally probabilities rather than integers.
In the data complexity perspective, it is easy to see that
the probabilistic membership problem $\probone{L}$ is in
$\text{FP}^{\#\text{P}}$ for any language $L$ which is in the class PTIME,
i.e., where there is a deterministic polynomial-time algorithm testing membership of an input
(non-probabilistic) word to the language $L$: see, e.g.,
\cite[Theorem 4.2]{gradel1998complexity} or \cite[p19]{monet2018combined}. 
All languages studied in the sequel will be PTIME languages in this sense (as is the
case of all CFLs).

We will be interested in languages $L$ for which $\probone{L}$ is in FP (we will
often abuse terminology and say in that case that $\probone{L}$ is in PTIME), and languages
$L$ for which $\probone{L}$ is
\emph{\#P-hard}, i.e., any problem in \#P can be reduced by Turing
reductions to $\probone{L}$: note that we always use Turing reductions (and not
many-one reductions).
When giving complexity bounds with an explicit polynomial
degree, to avoid specifying the details of the computational model we will state
them \emph{up to the cost of arithmetic operations}, i.e., assuming that all
arithmetic operations take unit time. 

Our main goal in this paper is to determine for which languages $L$ the problem $\probone{L}$ can
be solved in polynomial time data complexity, and for which it is
\#P-hard.

\section{CFLs and Unambiguity}
\label{sec:ambiguity}
To start our study of tractable cases of the $\prob$ problem for CFLs, 
we study the effect of \emph{unambiguity}.
On the one hand, we show that $\prob$ is tractable for
\emph{uCFLs}, in data complexity and even in combined
complexity. This implies in particular that $\prob$ is also tractable
for all regular languages, and gives a weighted analogue to the tractability result
of~\cite{bertoni1991complexity} (with similar techniques).

On the other hand, we show that $\prob$
can be \#P-hard for inherently ambiguous CFLs: we give an explicit hard linear CFL. We further show that $\prob$ is already \#P-hard for
unions of two linear uCFLs, though their language is more complicated.
This extends the intractability results
of~\cite{bertoni1991complexity} to the weighted setting, noting that their work instead shows
\emph{\#P$_1$-completeness}~\cite{hardishard} 
because their input instances are unweighted hence fully defined by their
length. Our techniques are similar but we encode Turing machine
computations with the encoding of~\cite{baker1974reversal} (whereas
\cite{bertoni1991complexity} uses \cite{hartmanis1967context}), so our result
applies to linear CFGs.

\paragraph{Unambiguous CFLs}
We focus on the case of unambiguous CFLs and show:

\begin{proposition}
  \label{prp:cyk}
  Given $\Gamma$ a uCFG and $p$ a probabilistic word, we can solve $\prob$ in 
  $O(|p|^3 |\Gamma|)$, assuming unit cost for arithmetic operations.
\end{proposition}

We do not claim that the cubic dependency in the probabilistic word is optimal,
and leave to future work the investigation of whether the complexity can be
lowered, e.g., to match the complexity of Valiant's
parser~\cite{valiant1975general}, or the quadratic upper bound given by
the Earley parser for unambiguous CFGs~\cite{earley1970efficient}.

This result implies that $\prob$ is tractable for many CFGs, e.g., the language
of palindromes, the language $\{a^n b^n \mid n \in \NN\}$, or the language of
Dyck words. It also
implies the tractability in data complexity of subclasses of uCFGs,
for instance \emph{deterministic CFGs}~\cite{ginsburg1965deterministic}, or indeed all \emph{regular
languages}. Our result generalizes the tractability in combined complexity of
(weighted or unweighted) counting for regular languages given as
\emph{unambiguous automata}, see, e.g.,~\cite{colcombet2015unambiguity}.
We note that, by contrast, hardness holds in combined complexity with
nondeterministic automata as input: we come back to this in \cref{sec:morecombined}.

\cref{prp:cyk} in fact will follow immediately from results given later in the
  paper, i.e., by combining \cref{prp:circuittract} and 
  \cref{prp:cykcircuit}. Still, we provide a short alternative proof sketch here:

  \begin{proof}[Proof sketch of \cref{prp:cyk}]
  We first preprocess the input uCFG $\Gamma$ in linear time to make it into \emph{arity-2 normal form}
  (2NF)~\cite{lange2009cnf}. In other words, $P$ consists of productions of the
  following form for various nonterminals $X$:
  \begin{itemize}
    \item $X \rightarrow \epsilon$
    \item $X \rightarrow a$ for $a \in \Sigma$
    \item $X \rightarrow Y$ for $Y \in N$
    \item $X \rightarrow YZ$ for $Y, Z \in N$
  \end{itemize}
  It is known that this operation preserves
  unambiguity~\cite[Appendix~A.2]{amarilli2022efficient}. We then generalize the
  textbook CYK parsing algorithm that computes inductively,
  for each nonterminal $X$ and each interval $1 \leq i \leq j \leq n$, 
  whether the factor of the input word with endpoints $[i,j]$ can be derived
  from~$X$. Instead, we compute the \emph{total probability}
  of the words that can be derived from~$X$ in the distribution represented by
  the factor $(p_i, \ldots p_j)$. As $\Gamma$ is unambiguous, every word admits
  at most one derivation tree, so there are no double counts.
\end{proof}

However, for arbitrary CFGs, we
cannot solve $\prob$ simply by counting the probability of derivation trees in this
fashion -- because the probability of words with multiple derivation trees is counted
as many times as there are trees\footnote{However, the algorithm as presented could
  apply to those CFGs in which all accepted words have precisely
  the same number of derivation trees. We do not know if this strictly
generalizes uCFLs: see~\cite{cstheoryambig}.}.

\paragraph{An intractable CFL}
The previous result shows that $\prob$ is tractable for uCFLs, leaving open the
complexity of inherently ambiguous CFLs. We will
now show intractability in this case by first giving a simple explicit CFL for
which $\prob$ is hard, before strengthening the result to a 2-ambiguous CFL in
the sequel.

Our \#P-hardness  result is by reducing from the problem of counting satisfying
assignments to a certain class of Boolean formulas, called \emph{PP2DNFs}, and
defined below:

\begin{definition}[\cite{suciu2011probabilistic}]
  A \emph{positive partitioned 2-DNF formula}, for short \emph{PP2DNF},
    is an expression on Boolean variables $V = \{x_1, \ldots, x_n, y_1, \ldots, y_n\}$ of the form
    $\bigvee_{1 \leq k \leq m}(x_{i_k} \land y_{j_k})$ where $i_1, \ldots, i_m,
    j_1, \ldots, j_m \in \{1, \ldots, n\}$.
  Given a PP2DNF formula $\Phi$ on variables $V$, a \emph{valuation} $v$
  is a mapping $\nu$ from $V$ to $\{0,1\}$. 
  It is said to be a \emph{satisfying valuation} if, when substituting the
  variables of~$\Phi$ according to~$\nu$, the formula $\Phi$ evaluates to true.
\end{definition}

Satisfiability for PP2DNFs is trivial as there are no negations; by contrast:

\begin{theorem}[\cite{hardishard,suciu2011probabilistic}]
  \label{thm:pp2dnfhard}
  The following problem, written \#PP2DNF, is \#P-hard: given a PP2DNF formula $\Phi$, count the number
  of satisfying valuations of~$\Phi$.
\end{theorem}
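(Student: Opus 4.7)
The plan is to establish \#P-hardness by a parsimonious reduction from the \#P-hard problem of counting independent sets in bipartite graphs (a classical result of Provan and Ball), exploiting the natural correspondence between PP2DNF formulas and bipartite graphs.

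Given a bipartite graph $G=(X\cup Y,E)$ with parts $X=\{x_1,\ldots,x_n\}$ and $Y=\{y_1,\ldots,y_n\}$, I would associate to $G$ the PP2DNF formula
\[
\Phi_G \coloneq \bigvee_{\{x_i,y_j\}\in E}(x_i\wedge y_j),
\]
which is computable in linear time in $|G|$, and conversely every PP2DNF over the partitioned variables $\{x_1,\ldots,x_n,y_1,\ldots,y_n\}$ arises as some $\Phi_G$. The key combinatorial observation is that a valuation $\nu\colon\{x_1,\ldots,x_n,y_1,\ldots,y_n\}\to\{0,1\}$ fails to satisfy $\Phi_G$ iff no edge $\{x_i,y_j\}\in E$ has both endpoints set to~$1$, which is exactly the statement that the set $S_\nu\coloneq\{v\in X\cup Y\mid \nu(v)=1\}$ is an independent set of~$G$. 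This correspondence $\nu \leftrightarrow S_\nu$ is a bijection between non-satisfying valuations and independent sets, so the number of satisfying valuations of $\Phi_G$ equals $2^{2n}$ minus the number of independent sets of~$G$. An oracle for \#PP2DNF therefore yields a polynomial-time algorithm for counting bipartite independent sets.

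The only non-trivial ingredient is the \#P-hardness of counting independent sets in bipartite graphs, which is also the main obstacle if one wants a fully self-contained proof. The standard path starts from Valiant's theorem that computing the permanent of a 0/1 matrix is \#P-hard, passes to counting perfect matchings in bipartite graphs, and then to counting bipartite independent sets (equivalently, bipartite vertex covers, or antichains in a two-layer poset) via the Provan--Ball reduction using an inclusion--exclusion / lattice argument. Since the statement is attributed to prior work, I would simply invoke the bipartite independent-set result as a black box and spend the write-up on the short and transparent reduction above.
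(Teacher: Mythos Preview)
Your argument is correct: the bijection between non-satisfying valuations of $\Phi_G$ and independent sets of the bipartite graph $G$ is exactly right, and invoking Provan--Ball for the \#P-hardness of counting bipartite independent sets completes a clean Turing reduction. One small point worth tightening in a final write-up is that the paper's definition of PP2DNF fixes the two variable blocks to have the same size~$n$, whereas a general bipartite instance may have $|X|\neq |Y|$; but this is harmless, since padding the smaller side with isolated vertices multiplies the independent-set count by a known power of~$2$.

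Regarding comparison with the paper: there is nothing to compare. The paper does not give a proof of this statement; it simply imports it as a cited result from the literature (the references point to Provan--Ball and to the probabilistic-database textbook of Suciu et al.). Your reduction is precisely the classical one underlying those references, so what you have written is essentially the proof that the paper chose to outsource.
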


We will now define our hard CFL $L_0$ on alphabet $\Sigma \coloneq \{0, 1,
\#\}$:
\[
  L_0 \coloneq \{u \# \Sigma^* v^R \Sigma^* \mid u, v \in
  \{0,1\}^*, v \leq u\}
\]
where we abuse notation and write $\Sigma^*$ to denote arbitrary factors
in~$\Sigma^*$, and where we denote by $\leq$ the order relation defined on $\{0,1\}$ by $0 \leq 1$ 
and extended in a pointwise manner on binary words of the same
length, i.e., we have $v \leq u$ if $|u| = |v|$ and for each $1 \leq i
\leq |u|$ we have $v_i \leq u_i$. Equivalently,
for each $1 \leq i \leq |u|$ such that $v_i = 1$ we also have
$u_i = 1$.

It is not hard to see that $L_0$ is a linear CFL, e.g., it can be recognized by
$\Gamma_0 = (\{S, S_1, S_2\},S,P)$ with productions:
\[
  S  \rightarrow S0 ~|~ S1 ~|~ S\# ~|~ S_1, \hfill
  S_1  \rightarrow 0S_10 ~|~ 1S_10 ~|~ 1S_11 ~|~ \#S_2, \hfill
  S_2  \rightarrow 0S_2 ~|~ 1S_2 ~|~ \#S_2 ~|~ \epsilon
\]

Our first hardness result is then:

\begin{proposition}
  \label{prp:hard1}
  The problem $\probone{L_0}$ is \#P-hard.
\end{proposition}
\begin{proof}
  We show that the DCCCP problem (recall \cref{def:dcccp}) for $L_0$ and $\Sigma' = \{0,1\}$ is \#P-complete, hence the
  same is true for $\probone{L_0}$.
  We reduce from \#PP2DNF, which is \#P-hard by \cref{thm:pp2dnfhard}.
  (We remark that the proof does not use the fact that the formula is
  partitioned, nor that it is in 2DNF: the same proof would apply to the problem
  of counting satisfying assignments of positive DNF formulas in general.
  However, the more stringent requirements on PP2DNF will be used in a
  subsequent hardness result, namely, \cref{prp:counthard}.)

  Let $\Phi$ be an input PP2DNF,
  let $V = \{z_1, \ldots, z_{2n}\}$ be the variables of~$\Phi$, and let
  $C_1, \ldots, C_m$ be the clauses. We code $\Phi$ into a
  partial word $w = u \# v_1^R \# \cdots \# v_m^R \#$ defined in the following
  way:
  \begin{itemize}
    \item The partial word $u$ consists of $2n$ wildcards '?'.
      It intuitively codes the choice of
      valuation of the $2n$ Boolean variables, and these are the only
      wildcards in the coding.
    \item For $1 \leq k \leq m$, the word $v_k$ is the (non-partial) word
      of length $2n$ where the $i$-th letter for $1 \leq i \leq 2n$ is $1$
      if variable $z_i$ occurs in clause $C_k$ and $0$ otherwise. It intuitively
      codes the clause $C_k$.
  \end{itemize}
  This construction is clearly in polynomial time.
  Consider the obvious bijection from the Boolean
  valuations $\nu$ of $V$ to the completions $w'$ of~$w$ defined by choosing a
  completion for~$u$ according to~$\nu$.
  We prove that an arbitrary Boolean valuation $\nu$ satisfies $\Phi$ if and only if its image by this bijection is in $L_0$.

  For the forward direction, assume that $\nu$ satisfies $\Phi$. Let $1 \leq k
  \leq m$ be the index of a clause $C_k = z_{i_k} \land z_{j_k}$ that $\nu$ satisfies. Then, we have
  $\nu(z_{i_k}) = \nu(z_{j_k}) = 1$. The word $v_k$ has $1$ at precisely the
  positions $i_k$ and $j_k$ and zeroes otherwise. Then, letting $u_\nu$ be the
  completion of~$u$ corresponding to~$\nu$, we know that $u_\nu$ carries a $1$ at
  those positions where $v_k$ carries a~$1$. Thus, considering the completion $w_\nu$
  of~$w$ corresponding to~$\nu$, we know that $w_\nu$ matches $u_\nu \# \Sigma^* v_k^R
  \Sigma^*$ with $v_k \leq u_\nu$, so $w_\nu$ is in~$L_0$.

  For the converse direction, assume that for some choice of Boolean valuation
  $\nu$ we have that $w_\nu$ is in $L_0$. By considering the position of
  the first \# in~$w$, and the fact that $u$ and the $v_k^R$ all have the same
  length, we know that there must exist $1 \leq k \leq m$ such that $u_\nu \geq
  v_k^R$. This witnesses that $\nu$ satisfies all variables of clause $C_k$,
  hence $\nu$ satisfies~$\Phi$. This concludes the proof.
\end{proof}

We remark that the hardness proof easily adapts to a more symmetric
language:

\begin{proposition}
  \label{prp:hard1b}
  The problem $\probone{L_0'}$ is \#P-hard for the following CFL on
  $\Sigma = \{0, 1, \#\}$:
  \[L_0' \coloneq
    \{\Sigma^* \# u \# \Sigma^* \# v^R \# \Sigma^* \mid u, v \in \{0,1\}^*, u \leq
v\}.\]
\end{proposition}
\begin{proof}
  The proof is the same as \cref{prp:hard1b} except that the partial word is
  defined, instead of $w = u \# v_1^R \# \cdots \# v_m^R \#$, as:
  \[
    w = \# 0 u 1 \# \# 1 v_1^R 0 \# \cdots 1 v_m^R 0 \#
  \]
  There is still an obvious bijection from the Boolean valuations $\nu$ of the
  variables $V$ to the completions of~$w$. If the $i$-th clause is satisfied
  by~$\nu$, then letting $u'$ be the outcome of~$u$ corresponding to~$\nu$, the
  factors $\# 0 u 1 \# $ and $\# 1 v_i^R 0 \#$ witness that $w' \in L_0'$.
  Conversely, if $w' \in L_0'$, our addition of $0$ and $1$'s ensure that the
  only possible choices are $\# 0 u' 1 \#$ on the one hand, and some $\# 1 v_i^R
  0 \#$ on the other hand, so we conclude like in the original proof that the
  Boolean valuation corresponding to~$u'$ must be a satisfying valuation.
\end{proof}

\paragraph{Intractability for 2-ambiguous linear CFLs}
We now move to our second hardness result:

\begin{proposition}
  \label{prp:hard2}
  Let $\Sigma = \{0,1,\#\}$ and $\Sigma' = \{0,1\}$.
  There are two 
  linear uCFLs $L_\LL$ and $L_\RR$ on~$\Sigma$
  such that
  $\probone{L_\LL \cup L_\RR}$, 
  is \#P-complete.
\end{proposition}

The result above implies in particular\footnote{We do not know whether this is a
strengthening, because it is not known whether
all 2-ambiguous CFLs can be written as the union of two uCFLs, cf 
\cite[Section 9.3, point (iv)]{wich2005ambiguity}.} that $\prob$ is
hard already for 2-ambiguous linear CFLs. 
We note that the situation is very different for regular languages: in data
complexity this is obvious because probabilistic membership to regular languages
is always tractable (via determinization and \cref{prp:cyk}), and in combined
complexity we will show tractability for $k$-ambiguous automata (see
\cref{sec:morecombined}).

We now present the proof of this result in the remainder of this section. We
will reduce directly from the problem of counting the accepting runs of a
nondeterministic polynomial-time Turing machine, which we will abstract away as
a counting problem. The construction overall resembles the proof of 
\cite{bertoni1991complexity} for unweighted counting, but using an encoding of
Turing machine runs similar to the one used in~\cite{baker1974reversal}.

More precisely, let $\Sigma \coloneq \{0,1\}$, a \emph{run}
over $\Sigma$ is a sequence of words $\rho = u_1, \ldots, u_n$ that all have same
length; its \emph{duration} is the integer~$|\rho| \coloneq n$, and its \emph{initial word}
is~$u_1$. Let $\Lambda \subseteq (\Sigma \times \Sigma)^*$ be a language of
\emph{transitions}. A \emph{legal run} $\rho = u_1, \ldots, u_n$ for $\Lambda$
is a run which satisfies the transitions of~$\Lambda$, i.e., 
for each $1 \leq i < n$ we have $u_i
\bowtie u_{i+1} \in \Lambda$, where we define $x \bowtie y$ for any two words having
same length as the word on $\Sigma \times \Sigma$ defined by zipping
together $x$ and $y$ (formally the $k$-th letter is $(a, b)$ for $a$ and
$b$ the respective $k$-th letters of $x$ and~$y$).

The \emph{run counting problem} for $\Sigma$ and
a language $\Lambda \subseteq (\Sigma \times \Sigma)^*$, is the
following. We are given as input a word $u_1 \in \Sigma^*$ of even length, and we must compute
the number of distinct legal runs that start with $u_1$ and have duration $|u_1|$. We claim the
following, which is rather straightforward by choosing $\Lambda$ to code the
transitions of a nondeterministic polynomial-time Turing machine that halts (has
no possible transitions) when the run is rejecting:

\begin{claim}
  \label{clm:turing}
  Let $\Sigma$ be the alphabet $\{0,1\}$.
  There is a regular language 
  $\Lambda \subseteq (\Sigma \times \Sigma)^*$,
  such that the run counting problem for $\Sigma, \Lambda$ is
  \#P-complete.
\end{claim}

\begin{proof}
  Take any \#P-complete counting problem $\Pi$, which by definition can be
  expressed as the number of accepting runs of some nondeterministic polynomial-time Turing
  machine~$\mathcal{M}$. We use standard notions of Turing machines, omitting
  formal definitions: our machine have a single tape that consists initially of
  the input to the problem followed by padding and which is also used as the
  working tape.
  Let $P$ be the polynomial that bounds the maximal running time
  of~$\mathcal{M}$ as a function of the input length: up to multiplying by two
  we can ensure that $P$ takes only even values, and up to increasing $P$ we
  ensure that on an input word $u \in \{0,1\}^n$ of length~$n$ every run of
  $\mathcal{M}$ terminates in \emph{strictly} less than $P(n)$ steps.
  We will assume that the length of the tape of the machine never changes, which
  can be achieved by padding the tape. We
  code the tape symbols of $\mathcal{M}$, its state, and
  markers for the position of the head, as fixed-length words on the fixed
  alphabet $\Sigma = \{0,1\}$. We will write the status of the tape and
  machine after each computation step as a word over $\Sigma^*$, which encodes
  both the tape contents, the head position (as a marker close to the point
  of the tape where the head is located), and the state (also part of the
  marker): we call this a \emph{configuration}. By our assumption that the tape
  length never changes, all such words will have the same length.

  We also assume that, when the machine $\mathcal{M}$ wants to reject, it has no
  successor configuration (i.e., no transitions are possible from a rejecting
  configuration); and when the machine $\mathcal{M}$ wants to accept, it loops
  (in a deterministic way) from its current configuration. This means that, on
  an input $u$ of length~$n$, after $P(n)$ steps the number of runs that are still
  active precisely corresponds to the number of accepting runs on~$u$ (because
  all other runs have finished earlier and had no successor configuration when
  they rejected).

  We thus define $\Lambda$ to be the language accepting the words $x \bowtie y$ such that
  applying one step of the machine changes the configuration from~$x$ to~$y$
  (again assuming that the machine never moves outside its tape). This language
  $\Lambda$ can be taken to be a regular language, because the words over $\Sigma$
  denoting valid configurations can be recognized by a regular language, and
  then the change from $x$ to~$y$ is a local change on~$x$ (i.e., we only do a
  local move of the head, a change of the state next to the head, and a change
  of the tape close to the head).

  We now show a polynomial-time reduction from the problem of counting the
  accepting runs of~$\mathcal{M}$ to the run counting problem on $\Sigma, 
  \Lambda$. Let $u_1$ be the input to~$\mathcal{M}$. Let $N \coloneq P(|u_1|)$,
  and pad $u_1$ to $u_1'$ by
  adding $N-|u_1|+1$ copies of the padding symbol to the right, so that $u_1'$
  has length~$N$; we will only
  consider runs of duration at most $N$, so the machine will never have time to reach
  the end of the padding. Note that the length of $u_1'$ is even by our choice
  of~$P$. Now, there is a bijection between the runs of
  $\mathcal{M}$ of length $N$ with initial tape $u_1$ (in the standard
  sense), and the legal runs according to our definition above with initial word
  $u_1'$ and with duration $N = |u_1'|$. This establishes the reduction, and the
  reduction is clearly in polynomial-time, which concludes.
\end{proof}

Hence, to establish \cref{prp:hard2}, it suffices to show the following (recalling that the hardness of DCCCP for~$L$ implies the hardness of~$\prob$):

\begin{proposition}
  Let $\Sigma = \{0,1,\#\}$ and $\Sigma' = \{0,1\}$,
  and let $\Lambda$ as in \cref{clm:turing}.
  Then there are two linear uCFLs $L_\LL$ and $L_\RR$ such that the run counting
  problem for $\Sigma', P, \Lambda$ reduces in polynomial time to
  DCCCP for $L_\LL \cup L_\RR$ and $\Sigma'$.
\end{proposition}

\begin{proof}
  As a preliminary observation we note that, by the inclusion-exclusion formula, on every
  probabilistic word $p$, the answer to DCCCP for $\LL(\Gamma_\LL) \cup
  \LL(\Gamma_\RR)$ is
  equal to the answer for $\LL(\Gamma_\LL)$ plus the answer for $\LL(\Gamma_\RR)$ minus
  the answer for $\LL(\Gamma_\LL) \cap \LL(\Gamma_\RR)$. As $\LL(\Gamma_\LL)$
  and $\LL(\Gamma_\RR)$
  are uCFLs, by \cref{prp:cyk} we can solve $\prob$ for them in PTIME, hence
  DCCCP.
  Thus, it suffices to show the hardness of DCCCP for $\LL(\Gamma_\LL) \cap
  \LL(\Gamma_\RR)$. 

  We now explain the reduction. 
Intuitively,
runs will be encoded as $w_1 \# w_3 \# \cdots \# w_4^R \#
w_2^R$ as in~\cite{baker1974reversal}, with $L_\LL$ checking the even
transitions and $L_\RR$ checking the odd transitions.
  Remember that the input to the run counting problem for $\Sigma', \Lambda$
  is a word $u_1 \in \{0, 1\}^*$ of even length $n \coloneq |u_1|$; we write $n
  = 2N$. Build a partial word as follows:
  \[
    p = u_1 \# u_3 \# \cdots \# u_{2N-1} \# \# u_{2N} \# \cdots \# u_2
  \]
  where there are $2N$ blocks in total, and where $u_i$ for $i>1$ is simply $n$
  copies of the wildcard $?$.
  The completions of $p$ must be of the form 
  \[
    q = v_1 \# v_3 \# \cdots \# v_{2N-1} \# \# v_{2N}^R \# \cdots \# v_2^R
  \]
  with $v_1 = u_1$ and each $v_i$ is a word of $\{0,1\}^n$. 
  There is an obvious bijection between the outcomes of~$p$ and the runs over
  $\{0,1\}$ starting with the word $u_1$ and having duration $n$.
  Our goal is to
  design the grammars $\Gamma_\LL$ and $\Gamma_\RR$ so as to ensure that $q$ is
  in $\LL(\Gamma_\LL) \cap \LL(\Gamma_\RR)$ if and only if the corresponding run
  $v_1, \ldots, v_{2N}$ is a legal run.

  The linear grammar $\Gamma_\LL$ checks that, for each $1 \leq i \leq N$, we
  have $v_{2i-1} \bowtie v_{2i} \in \Lambda$. Linear rules can describe
  unambiguously the unzipping of words of~$\Lambda$ (following a deterministic
  automaton for the fixed language $\Lambda$). This makes it possible to derive
  protowords of the form \(x\#T\#y^R\) for some non-terminal \(T\) while
  ensuring that \(x\bowtie
  y\) is an element of \(\Lambda\). The deterministic presentation of
  \(\Lambda\) and our conventions (the language \(\Lambda\) accepts the zipping
  of two words having the same length) ensure that the grammar is unambiguous.
  Then the grammar has a rule that produces the empty word immediately after we
  have produced the symbol \(\#\) on each side.

  The linear grammar $\Gamma_\RR$ is similar to \(\Gamma_\LL\). It checks that,
  for each $1 \leq i < N$, we have $v_{2i} \bowtie v_{2i+1} \in \Lambda$. For
  this, we first skip the letters to the left of the word until reaching and
  consuming the leftmost \#. Then, we proceed like $\Gamma_\LL$ but exchanging
  the roles of the left and right. We finish as soon as we reach the
  middle-of-word marker \#\#.

  We have defined $\Gamma_\LL$ and $\Gamma_\RR$, which ensure that an outcome of
  $p$ is in $\LL(\Gamma_\LL) \cap \LL(\Gamma_\RR)$ iff the corresponding run is
  a legal run. Hence, the number of distinct legal runs starting at $u_1$ and of
  duration $|u_1|$ is precisely the answer to DCCCP with $\LL(\Gamma_\LL)
  \cap \LL(\Gamma_\RR)$ and $\Sigma'$ on~$p$, divided by the normalization
  factor (of polynomial size) $2^{(2N-1)n}$. This establishes the correctness of
  the reduction and concludes the proof.
\end{proof}

\section{Poly-Slicewise Unambiguity}
\label{sec:combined}
We have seen in the previous section that $\prob$ is tractable for uCFLs, and
can be intractable already for 2-ambiguous linear CFLs. This leaves open the
question of whether unambiguity is the tractability boundary for $\prob$ on
CFLs. In this section, we show that this is not the case: there are
inherently ambiguous CFLs for which $\prob$ is tractable.
We do so by introducing the notion of \emph{poly-slicewise-unambiguous}
languages, which are immediately seen to generalize uCFLs and to enjoy 
tractable $\prob$.
Then, we state some consequences of this result:
first on \emph{polyslender languages}, then on
\emph{unambiguous polynomial-time counter automata}. Third, we show that, by contrast, $\prob$
can be hard already on very restricted counter automata if we forego the
unambiguity requirement.

\paragraph{Poly-slicewise-unambiguity}
We introduce our more general class of tractable languages:
\begin{definition}
  A language $L$ is \emph{poly-slicewise-unambiguous} if there exists a
  polynomial-time algorithm $\mathcal{A}_L$ for the following task: given as input an integer
  $n \in \NN$ (written in unary), compute a uCFG $\Gamma_{L,n}$ such that
  $\LL(\Gamma_{L,n}) \cap \Sigma^n = L \cap \Sigma^n$.
\end{definition}

Poly-slicewise-unambiguous languages include languages that are not CFLs, hence
tractability will also hold for such languages.
Note that the complexity of $\mathcal{A}_L$ is measured as a function of $n$ only
(as $L$ is fixed).
Further note that a similar notion 
has been studied in~\cite{sakoda1978nondeterminism,yamakami2025power}
under the name “1U”. These are families of sequences of languages $(L_i)_{i\in \mathbb{N}}$ where each $L_i$ has an unambiguous NFA 
of size $p(i)$ for some fixed polynomial $p$. However, unlike us, the definition
of 1U does not
require the NFAs to be computable in PTIME, i.e., these works use
a non-uniform setting.

Of course, any uCFL $L$ is immediately poly-slicewise-unambiguous (taking
$\mathcal{A}_L$ that always returns an unambiguous grammar~$\Gamma$ for $L$), but we will later show that
some inherently ambiguous CFLs are poly-slicewise-unambiguous.
The following tractability result in data complexity immediately follows from
\cref{prp:cyk}:

\begin{proposition}
  \label{prp:poly}
  Let $L$ be poly-slicewise-unambiguous. Then $\probone{L}$ is in
  polynomial time.
\end{proposition}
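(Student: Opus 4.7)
The plan is to reduce directly to the combined-complexity tractability statement of \cref{prp:cyk}, using the key property that a probabilistic word of length~$n$ only assigns nonzero probability to words of length exactly~$n$.

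Concretely, on input a probabilistic word $p$ of length $n \coloneq |p|$, I would first invoke the polynomial-time algorithm $\mathcal{A}_L$ guaranteed by poly-slicewise-unambiguity on the unary representation of~$n$. This produces, in time polynomial in~$n$, a uCFG $\Gamma_{L,n}$ whose size is therefore polynomial in~$n$ and which satisfies $\LL(\Gamma_{L,n}) \cap \Sigma^n = L \cap \Sigma^n$.

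Next, I would observe that by the definition of the semantics of probabilistic words, we have $p(w) = 0$ for every word~$w$ with $|w| \neq n$. Hence $p(L) = \sum_{w \in L \cap \Sigma^n} p(w) = \sum_{w \in \LL(\Gamma_{L,n}) \cap \Sigma^n} p(w) = p(\LL(\Gamma_{L,n}))$, where the last equality again uses the fact that words outside $\Sigma^n$ contribute zero.

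Finally, I would apply \cref{prp:cyk} to $\Gamma_{L,n}$ and~$p$ to compute $p(\LL(\Gamma_{L,n}))$ in time $O(|p|^3 \cdot |\Gamma_{L,n}|)$ up to the cost of arithmetic, which is polynomial in~$|p|$ since $|\Gamma_{L,n}|$ is polynomial in~$n$. Adding the polynomial cost of running $\mathcal{A}_L$ yields an overall polynomial-time procedure for $\probone{L}$. There is essentially no obstacle here: once the definitions are unfolded, the argument is a direct composition of $\mathcal{A}_L$ with the uCFG algorithm of \cref{prp:cyk}, and the only subtlety worth stating explicitly is the length-$n$ restriction that lets us freely replace $L$ by $\LL(\Gamma_{L,n})$ inside~$p(\cdot)$.
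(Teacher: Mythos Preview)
Your proposal is correct and follows exactly the same approach as the paper: run $\mathcal{A}_L$ on the input length~$n$ to obtain the uCFG $\Gamma_{L,n}$, then invoke \cref{prp:cyk}. The paper's proof is a one-sentence version of yours (it even gives the resulting bound $O(n^{k+3})$ when $\mathcal{A}_L$ runs in $O(n^k)$), and your explicit justification that $p(L) = p(\LL(\Gamma_{L,n}))$ via the length-$n$ support of~$p$ is a welcome spelling-out of the only point left implicit there.
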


Indeed, to solve $\probone{L}$ on an input probabilistic word $p$ of length $n$, we
use $\mathcal{A}_L$ to compute in polynomial time the uCFG $\Gamma_{L,n}$, and then
invoke \cref{prp:cyk}. The overall complexity (up to arithmetics) is in $O(n^{k+3})$ if 
$\mathcal{A}_L$ has running time bounded by $O(n^k)$.

We will now show how \cref{prp:poly} specializes to so-called \emph{bounded
CFLs} and to counter automata, and remark in particular that it strictly
generalizes \cref{prp:cyk}.

\paragraph{Bounded CFLs}
We now study the class of \emph{bounded CFLs}, namely,
those CFLs~$L$ for which there exist words $u_1, \ldots, u_k$ such that $L
\subseteq u_1^* \cdots u_k^*$. This class has been studied for a long
time~\cite{ginsburg1964bounded} and was
equivalently characterized as those CFLs that are \emph{polyslender}, i.e.,
there is $k\in \mathbb{N}$ such that for all $n\in \mathbb{N}$, we have $|L\cap
\Sigma^n|$ is in $O(n^k)$.
It is
immediate that bounded CFLs are polyslender, and the reverse inclusion is known
to hold~\cite{ilie2000characterization}.
The following result is easy to show by a variant of CYK:

\begin{proposition}
  \label{prp:polyslender}
  Given a CFG $\Gamma$ for a bounded CFL and an integer $n \in \NN$, we can
  compute the list of words of $\LL(\Gamma) \cap \Sigma^n$ in time polynomial in
  $\Gamma$ and~$n$.
\end{proposition}
\begin{proof}
  We first transform the input bounded CFG $\Gamma$ to a
  2NF CFG $\Gamma'$ recognizing the same language, as explained in the beginning
  of the proofsketch of \cref{prp:cyk}: then $\Gamma'$ is obviously still
  polyslender. 

  Then, we do the standard step of precomputing which nonterminals $X$ are
  \emph{nullable}, i.e., $X$ can derive $\epsilon$. This can be done in
  linear time
  in $\Gamma'$ by applying the following inference rules:
  \begin{itemize}
    \item if we have a production $X \rightarrow \epsilon$ then $X$ is
      nullable;
    \item if we have a production $X \rightarrow Y$ and $Y$ is nullable then $X$
      is nullable;
    \item if we have a production $X \rightarrow YZ$ and $Y$ and $Z$ are both
      nullable then $X$ is nullable.
  \end{itemize}

  If $n = 0$ then we determine whether $\Gamma'$ derives $\epsilon$ or
  not, simply by checking if the axiom is nullable; so it suffices to concentrate on the case $n > 0$.

  We then do the standard transformation of eliminating epsilon productions.
  More precisely, for every production $X \rightarrow YZ$, if $Y$ is nullable we
  add the production $X \rightarrow Z$, and if $Z$ is nullable we add the
  production $X \rightarrow Y$. Then we remove all productions $X \rightarrow
  \epsilon$. One can check that this produces in linear time an 2NF CFG $\Gamma''$
  where $\Gamma''$ contains no nullable nonterminals, and where 
  we have $\LL(\Gamma'') \setminus \{\epsilon\} = \LL(\Gamma') \setminus
  \{\epsilon\}$ (i.e., the recognized languages are the same possibly up to the
  empty word epsilon).

  Next, we process $\Gamma''$ to make sure that all nonterminals are \emph{useful},
  i.e., there is a derivation tree of~$\Gamma''$ in which they appear. This is
  achieved by first computing in linear time the nonterminals that are
  \emph{reachable} from the axiom (i.e., the axiom is reachable, and if a
  nonterminal $X$ occurs in the right-hand-side of a production from a reachable
  nonterminal then $X$ is reachable) and deleting the non-reachable
  nonterminals. Second, we compute in linear time the nonterminals that derive a
  non-empty language (i.e., if there is a production $X \rightarrow a$ then $X$
  derives a non-empty language, and a nonterminal $X$ has a production where all
  nonterminals in the right-hand-side derive a non-empty language then $X$
  derives a non-empty language) and remove the nonterminals that do not derive a
  non-empty language. It is easy to see that this linear-time process does not
  change $\LL(\Gamma'')$. Further, it is now the case that every nonterminal $X$ of
  $\Gamma''$ is useful: we can build a witnessing derivation tree for~$X$ by
  starting from some path witnessing that $X$ is reachable from the axiom, and
  then we replace every leaf of the resulting tree which is labeled by a
  nonterminal by some derivation tree from that nonterminal, using the fact that
  all nonterminals derive a non-empty language.

  Next, we eliminate all cycles of productions of the form $X_1 \rightarrow X_2,
  \ldots, X_k \rightarrow X_1$. 
  We compute
  from $\Gamma''$ in linear time the directed graph formed of the rules $X
  \rightarrow Y$, and compute in linear time its strongly connected components
  (SCCs). Then, for each SCC, introduce a fresh nonterminal $X$ (making it the axiom
  if the SCC contains the axiom), and replace all occurrences of the
  nonterminals of the SCC with $X$ in all productions. Last, delete all
  productions of the form $Y \rightarrow Y$. The resulting grammar $\Gamma'''$ is
  still computed in linear time from~$\Gamma$, is still in 2NF, recognizes the
  same language as $\Gamma$ up to~$\epsilon$, and now ensures that there is a
  total order $<$ on the nonterminals such that whenever we have a rule of the
  form $X \rightarrow Y$ then $X < Y$.%

  We now do a bottom-up computation, for each nonterminal $X$ and length $1 \leq
  i \leq n$, of the language $L_{X,i}$ defined as follows (*):
  the words of length $i$ that can be derived
  from~$X$. The process closely mimics the construction of the proof of
  \cref{prp:cykcircuit}. The set $L_{X,i}$ is the union, across all productions
  with left-hand-side $X$:
  \begin{itemize}
    \item For productions of the form $X \rightarrow a$ with $a$ a terminal, of
      the single-letter word $a$ if $i=1$ and nothing otherwise;
    \item For productions of the form $X \rightarrow Y$ with $Y$ a nonterminal,
      of the set $L_{Y,i}$;
    \item For productions of the form $X \rightarrow YZ$ with $Y$ and $Z$
      nonterminals, of the concatenations $L_{Y,j} L_{Z,i-j}$ across all $1 \leq j
      < i$.
  \end{itemize}
  Note that the definitions are nonrecursive thanks to the order $<$ on sets
  with the same index $i$ and otherwise to the fact that the definition of sets
  with larger $i$ only uses sets with smaller~$i$. It is immediate to show by
  induction that the sets $L_{X,i}$ satisfy invariant (*), which establishes
  correctness. It is also obvious that, since all nonterminals are useful, the
  computed sets are all of polynomial size (since $\Gamma$ is bounded), so the overall computation is
  polynomial because we do polynomially many operations above on these sets of
  polynomial size.
\end{proof}

Thus, bounded CFLs are poly-slicewise-unambiguous,
even in ``combined complexity'':

\begin{claim}
  Any polyslender CFL $L$ is poly-slicewise-unambiguous, and the algorithm
  $\mathcal{A}_L$ can be made to run in PTIME in the input length and in an input
  CFG representing~$L$.
\end{claim}
\begin{proof}
  Fix any CFG $\Gamma$ that recognizes the language.
  The algorithm $\mathcal{A}_L$ proceeds as follows. First, given the length $n
  \in \NN$, use \cref{prp:polyslender} to compute the explicit list of words
  $w_1, \ldots, w_m$ of
  $L \cap \Sigma^n$ in polynomial time. Then, build from this in linear time the
  CFG $S \rightarrow w_1 | \cdots | w_m$. This CFG recognizes $L \cap \Sigma^n$
  by hypothesis, and it is immediate that it is unambiguous. Further, the entire
  process is in polynomial time in $\Gamma$ and~$n$.
\end{proof}

This implies the following combined tractability result of $\prob$ on bounded
CFLs:

\begin{corollary}
  \label{cor:polyslender}
  The $\prob$ problem for polyslender CFLs is tractable in combined complexity.
\end{corollary}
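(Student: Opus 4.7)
The plan is to combine the preceding Claim (that polyslender CFLs are poly-slicewise-unambiguous in combined complexity) with \cref{prp:cyk}. Given as input a CFG $\Gamma$ recognizing a polyslender CFL $L$ and a probabilistic word $p$ of length $n$, I would first invoke the algorithm $\mathcal{A}_L$ from the Claim to compute, in time polynomial in $|\Gamma|$ and $n$, a uCFG $\Gamma'$ with $\LL(\Gamma') \cap \Sigma^n = L \cap \Sigma^n$. I would then apply \cref{prp:cyk} to $\Gamma'$ and $p$, obtaining $p(\LL(\Gamma'))$ in time $O(|p|^3 |\Gamma'|)$.

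The key observation to close the argument is that the probabilistic word $p$ assigns probability zero to every word whose length differs from $n$, so $p(\LL(\Gamma'))$ depends only on $\LL(\Gamma') \cap \Sigma^n$, which by construction equals $L \cap \Sigma^n$. Therefore $p(\LL(\Gamma')) = p(L)$, and the combined running time is polynomial in $|\Gamma|$, $n$, and $|p|$ up to the cost of arithmetic operations.

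Alternatively, a more direct route bypasses uCFGs entirely and appeals only to \cref{prp:polyslender}: I could enumerate the list $w_1, \dots, w_m$ of all words of $L \cap \Sigma^n$ in polynomial time, compute each $p(w_i)$ in $O(n)$ time as the product of the letter probabilities specified by $p$, and return $\sum_{i=1}^m p(w_i)$. Polyslenderness guarantees that $m$ is polynomial in~$n$, so the whole procedure is polynomial in $|\Gamma|$ and $|p|$.

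There is no serious obstacle here: the substantive content is already carried out by \cref{prp:polyslender}, which materializes the length-$n$ slice of~$L$ explicitly, and after that both approaches reduce to routine summation or to invoking the unambiguous CYK variant of \cref{prp:cyk}.
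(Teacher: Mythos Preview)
Your proposal is correct and matches the paper's approach: the paper treats \cref{cor:polyslender} as an immediate consequence of the preceding Claim together with \cref{prp:poly} (equivalently \cref{prp:cyk}), exactly as in your first route, and your alternative direct-summation argument is also explicitly mentioned in the paper's informal discussion of bounded CFLs. There is nothing to add.
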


Importantly, some polyslender CFLs are inherently ambiguous, such as 
$\{a^n b^m c^m \mid n, m \in \NN\} \cup \{a^n b^n
c^m \mid n, m \in \NN\}$. Hence, this result covers some languages not covered
by \cref{prp:cyk}. Also note that the results above adapt to any
language which is ``constructively polyslender'' in the sense of 
\cref{prp:polyslender} (even beyond CFLs).

\paragraph{Unambiguous counter automata}
We now present another language formalism enjoying tractable $\prob$
via \cref{prp:poly}, namely, \emph{polynomial-time counter automata}:

\begin{definition}
A \emph{counter automaton} is a machine $\calA =
(Q, q_0, k, F, \delta)$ where $Q$ is a finite set of \emph{states},
$q_0 \in Q$ is the \emph{initial state}, $k\in \NN$ is the number of \emph{counters}, $F
\subseteq Q \times \ZZ^k$ is the \emph{acceptance relation},
and $\delta\subseteq Q \times \ZZ^k \times \Sigma
\times Q \times \ZZ^k$ is the \emph{transition relation}.
  Note that we do not require that $F$ and $\delta$ be finite; instead they will
  be specified by polynomial-time algorithms, as we explain next.

A \emph{configuration} of $\calA$ is a tuple $(q, \vec v)$ with $q \in Q$ and $\vec v \in
\ZZ^k$, the \emph{initial configuration} is $(q_0, \vec 0)$, and a configuration
is \emph{final} if it is in the relation~$F$. A configuration
$(q', \vec v')$ is a \emph{successor} of~$(q, \vec v)$ for the letter $a \in
\Sigma$ if the tuple $(q, \vec v, a, q', \vec v')$ is in the relation $\delta$.
A \emph{run} of $\calA$ on a word $w = a_1 \cdots a_n$ is a sequence of
configurations $c_0, \ldots, c_n$ with $c_0$ the initial configuration and
$c_{i+1}$ a successor of~$c_i$ for~$a_i$ for each $1 \leq i \leq |w|$. The run is
\emph{accepting} if the last configuration is final. The \emph{language}
$\LL(A)$ is the set of words on which $\calA$ has an accepting run.
We say that $\calA$ is \emph{unambiguous} if it has at most one accepting run on every
word.

We say that a counter automaton is \emph{polynomial-time} if there is a
polynomial $P$ satisfying the following two requirements. First,
for every $n \in \ZZ$ and for every accepting run
$(q_0, \vec 0), (q_1, \vec{v_1}), \ldots, (q_n, \vec{v_n})$ of~$\calA$ on a word $w$
of length~$n$, we have $\|\vec{v_i}\|_\infty \leq P(n)$ for all $1 \leq i \leq n$,
where $\|\vec{v}\|_\infty$ denotes the max of the absolute values.
Second, there is an algorithm that decides whether $F(q,\vec v)$ holds on an
input $(q, \vec v) \in Q \times \ZZ^k$ in time bounded by
$P(\|\vec{v}\|_\infty)$, and likewise there is an algorithm deciding whether
$\delta(q, \vec v, a, q', \vec w)$ holds in time bounded by
$P(\|\vec{v}\|_\infty + \|\vec{w}\|_\infty)$.
\end{definition}

The model described here is very general: it allows arbitrary
polynomial-time computations at each step, and it can freely use the value of
counters during the run (i.e., it is not \emph{blind} or \emph{partially blind}
\cite{greibach1978remarks}).
Thus, it generalizes the model of counter machines
from Fischer et al.~\cite{fischer1968counter} when those machines are
required to be \emph{real-time} (i.e., they read one input symbol at each
transition), or more generally when they operate in polynomial-time (so they can only do a polynomial number
of autonomous transitions between input symbols).
Further, the model generalizes
\emph{Vector Addition Systems with States} or
\emph{VASSes}~\cite{karp1969parallel};
\emph{Parikh automata}~\cite{klaedtke2003monadic} (also called
\emph{$\mathbb{Z}$-VASSes}~\cite{czerwinski2020approach} or \emph{integer
VASSes}~\cite{clemente2017separability});
pushdown automata with a unary stack alphabet aka \emph{one-counter
automata}~\cite{valiant1975deterministic}; and extensions such as \emph{Parikh
one-counter automata}~\cite{cadilhac2025parikh}. 
The main restriction is that our counter automata
must read the word in a one-way fashion and in particular they only see each
input symbol once.

The non-probabilistic membership problem for 
polynomial-time counter automata $A$ is in PTIME, and
for unambiguous such automata 
$\prob$ is in PTIME for similar reasons:

\begin{proposition}
  \label{prp:counter}
  Let $\calA$ be an unambiguous polynomial-time counter
  automata. Then $\LL(\calA)$ is poly-slicewise-unambiguous.
  (Hence, $\probone{\LL(A)}$ can be solved in polynomial time.)
\end{proposition}
\begin{proof}
  The polynomial-time algorithm works as follows. Given the length $n$, we build
  an NFA $A_n$ 
  such that
  $\LL(A) \cap \Sigma^n = \LL(A_n) \cap \Sigma^n$. 
The states of $A_n$ are $\{0,
\ldots, n\} \times \mathcal{C}$ for $\mathcal{C}$ the space of potentially
reachable configurations: their
number is polynomial in~$n$ thanks to the polynomial-time bound on~$A$. The final
states of $A_n$ are the states $(n,c)$ for $c \in \mathcal{C}$ a final configuration, which can
be checked in polynomial time for all configurations, and for each $0 \leq i <
n$ and $c, c' \in \mathcal{C}$ there is a transition
from $(i,c)$ to $(i+1,c')$ whenever $c'$ is a successor of~$c$ for~$a$, which can
again be checked in PTIME.

  Now, as $A$
  is unambiguous, we know that $A_n$ is in fact an unambiguous NFA. Further, by construction,
  $A_n$ accepts precisely the words of $\LL(A)$ that have length~$n$. We
  conclude because it is easy to convert unambiguous NFAs to uCFGs in linear time.
\end{proof}

By \cref{prp:poly}, this immediately implies that $\probone{L}$ is tractable for
any language recognized by an unambiguous polynomial-time counter automata: so
the same is true for languages recognized by unambiguous Parikh
automata, unambiguous VASSes, etc.

To contrast with our previous results, note that unambiguous
polynomial-time counter automata and uCFLs are incomparable.
These automata cannot recognize some uCFLs such as the language of palindromes,
intuitively because this requires exponentially many configurations:
see \cite[Theorem~5]{moore2012automata}.
Conversely, these automata can recognize some languages that are not even CFLs (e.g., 
$\{a^n b^n c^n \mid n \in \NN\}$), and they can also recognize some CFLs such as
$\{u \in \Sigma^* \mid |u|_a
\neq |u|_b \lor |u|_a \neq |u|_c\}$ that are not polyslender
and inherently ambiguous~\cite[Proposition~1.10]{berstel1990context}, so their
tractability for $\prob$ did not follow from our previous results.

\paragraph{Nondeterministic counter automata}
Is the unambiguity requirement in \cref{prp:counter} necessary?
We will now
show that $\prob$ is intractable on nondeterministic counter automata, even
for CFLs accepted by very restricted such automata:

\begin{proposition}
  \label{prp:counthard}
  The problem $\prob$ is \#P-hard for $L_1$ and $L_2$ on $\Sigma = \{a,
  \$, \#\}$, with:
  \begin{align*}
    L_1 & \coloneq \{\Sigma^* \# \# a^i \# \Sigma^* \# a^i \# a^j \# \Sigma^* \# 
      a^j \# \# \Sigma^* \mid i, j > 0\}\\
      L_2 & \coloneq \{\Sigma^* \# \# a^i \# \Sigma^* \# a^{i'} \# a^j \# \Sigma^*
        \# a^{j'} \# \# \Sigma^* \mid i, i', j,j' > 0, i+j = i'+j'\}
      \end{align*}
\end{proposition}

\begin{proof}
  We show a reduction from the \#PP2DNF problem, which is \#P-hard by
  \cref{thm:pp2dnfhard}, to the DCCCP problem for $L_1$ and $\Sigma' = \{\#,
  \$\}$: this implies hardness for $\probone{L_1}$. We explain in the proof
  why the reduction also works for DCCCP with $L_2$ and $\Sigma'$, establishing
  the hardness of $\probone{L_2}$. Note that clearly $L_1 \subseteq L_2$.

  Let $\Phi$ be an input PP2DNF,
  let $X = \{x_1, \ldots, x_n\}$ and $Y = \{y_1, \ldots, y_n\}$ be the variables of~$\Phi$, and let
  $C_1, \ldots, C_m$ be the clauses. Let $N \coloneq n+1$. We code $\Phi$ into a
  partial word $p = u v w$, where:
  \begin{itemize}
    \item $u$ codes the status of the variables of~$X$: it is the concatenation
      of words $u_1, \ldots, u_n$, with $u_i = \$ \# ? a^i \# \$$, where all
      letters are deterministic except the wildcard $?$ which stands either
      for $\#$ or for \$;
    \item $v$ codes the clauses of~$\Phi$: it is the concatenation of
      deterministic words $v_1, \ldots, v_m$, with $v_k = \$ \# a^c \# a^{Nd} \#
      \$$, for $x_c$ and $y_d$ the two variables occurring in clause $C_k$
    \item $w$ codes the status of the variables of~$Y$: it is the concatenation
      of words $w_1, \ldots, w_n$, with $w_i = \$ \# a^{Ni} ? \# \$$;
  \end{itemize}
  There is an obvious bijection between valuations of~$X \uplus Y$ and
  completions of the partial word $p$ where the $?$ are substituted by $\#$ for
  true variables and by $\$$ for false variables. We claim that an outcome $q$ of
  $p$ is in $L_1$ iff it is in $L_2$ iff the corresponding valuation $\nu$ is a
  satisfying valuation.

  For the backward direction, if $\nu$ is a satisfying valuation, let $C_k = x_c
  \land y_d$ be a satisfied clause. By isolating the outcomes of $u_c$, $v_i$,
  and $w_d$, we  know that the outcome $q$ of~$p$ is of the following form:
  \[
    \Sigma^* \$ \# \# a^c \# \$ \Sigma^*
    \$ \# a^c \# a^{Nd} \# \$ \Sigma^*  %
    \$ \# a^{Nd} \# \# \$ \Sigma^*
  \]
  From this it is easily seen that $q \in L_1$, so in particular $q \in L_2$.

  For the forward direction, assume that $q \in L_2$. We will show that this
  implies $q \in L_1$, and that it implies the existence of a satisfying
  valuation. For this, consider the occurrences of $\$$ that are required to
  exist in~$p$. The only place where we have factors of the form $\# a^+ \# a^+
  \#$ is in $v$, as in $u$ and $w$ (and in all their outcomes) there is at most
  three $\#$ between any two successive $\$$ and two of these are consecutive.
  Further, the only place where we have factors of the form $\# \# a^+ \#$ is in
  outcomes of~$u$, and the only place where we have factors of the form $\# a^+
  \# \#$ is in outcomes of~$w$.
  So the fact that $q \in L_2$ must be
  witnessed by some factor $\# a^c \# a^{Nd} \#$ in~$v$, corresponding to the
  choice of a clause $C_i = x_c \land y_d$, and a factor $\# \# a^{c'} \#$ in the
  outcome of~$u$ and a factor $\# a^{Nd'} \# \#$ in the outcome of~$w$.  What is
  more, the outcome of $u$ and $w$ must then ensure that the corresponding $?$'s
  have been substituted by~$\#$'s, so that variables $x_{c'}$ and $y_{d'}$ must
  be true.

  We know from $q \in L_2$ that $c + Nd = c' + Nd'$. Notice now that this
  implies $c = c'$ and $Nd = Nd'$. By construction of~$p$, we have $1 \leq c,\,
  c' \leq n < N$. Thus, reasoning modulo~$N$, from $c + Nd = c' + Nd'$ we must
  have $c = c'$, and hence $d = d'$. Thus $q \in L_1$. In particular we have $d =
  d'$. So we know that the variables $x_c$ and $y_d$ are made true by the
  valuation corresponding to the outcome~$q$, thus the valuation satisfies
  clause $C_i$, and then it is a satisfying valuation. This concludes the proof.
\end{proof}

Both $L_1$ and $L_2$ are CFLs because they can be recognized by
nondeterministic pushdown automata with unary stack alphabet: this is a
restriction of our counter automaton formalism, with only one counter and very
simple transition and acceptance relations. The language $L_2$ can even be
recognized by a (nondeterministic) \emph{1-dimensional Parikh automaton}, i.e.,
a counter automaton with a single blind counter which is only tested at the end
of the run.
Hence, hardness for $\prob$ can hold even for very
restricted ambiguous counter automata.

\section{Tractable Circuits}
\label{sec:circuits}
We have seen that $\prob$ is tractable for any uCFL, and also for any
poly-slicewise-unambiguous language (yielding more tractable CFLs).
In this section, we give a more elaborate tractability approach for $\prob$,
using a notion of \emph{tractable multivalued circuits}, that we will follow in
this section and extend in \cref{sec:circuits}. We will show that this formalism ensures the tractability
of~$\prob$, and that it
subsumes uCFLs and poly-slicewise-unambiguous languages.

The section is structured as follows. We first introduce 
our circuits and show that counting is tractable for them.
Then, we show that tractable circuits generalize the tractability of poly-slicewise-unambiguous
languages, and further show that the generalization is strict.

\paragraph{Tractable circuits}
We will define \emph{(smooth) $\times,\uplus$-circuits} following
\cite{amarilli2024tractable},
which are a multivalued analogue to the
\emph{d-DNNF} circuits used in knowledge
compilation~\cite{darwiche2002knowledge}.
A \emph{circuit} $C$ is a directed acyclic graph
with vertices called \emph{gates}.
An \emph{input} to a gate $g$ is a gate $g'$ having a directed edge to~$g$,
the \emph{fan-in} of~$g$ is the number of inputs that $g$ has, and the
\emph{fan-out} of~$g$ is the number of gates of which $g$ is an input. An
\emph{input gate} is a gate with fan-in zero. We assume that $C$ has a
distinguished gate with fan-out zero, called the \emph{output gate}.

Let $\Sigma$ be the alphabet. A \emph{$\times,\uplus$-circuit} on~$\Sigma$ is a
circuit whose input gates are labeled by pairs
$n:a$ for $n>0$ and $a \in \Sigma$, and whose internal gates are labeled with
$\times$ or $\uplus$ and satisfy the three requirements of \emph{decomposability},
\emph{smoothness}, and 
\emph{disjointness} that we now present.
To state the first two requirements, we need to define inductively the \emph{domain} of
each gate $g$ of~$C$: the domain of an input gate $g$ labeled by $n:a$ is
$\dom(g) \coloneq \{n\}$,
and the domain $\dom(g)$ of an internal gate $g$ is the union of $\dom(g')$ for all inputs
$g'$ of~$g$.
The \emph{domain} $\dom(C)$ of~$C$ is that of its output gate.
We then say that a $\times$-gate $g$ is \emph{decomposable} if the
domains of the input gates of~$g$ are pairwise disjoint, and we require that $C$ is
\emph{decomposable}, i.e., every $\times$-gate of $C$ is. We say that a
$\uplus$-gate $g$ is \emph{smooth} if the domains of the input gates of~$g$ are
all identical, and we require that $C$ is \emph{smooth}, i.e., every $\uplus$-gate is.

To state disjointness, we need to define inductively the
\emph{sets of assignments} $\SS(g)$ captured by each gate $g$ of~$C$, as a set
of partial functions $f$ from $\NN$ to~$\Sigma$ with $\dom(f) = \dom(g)$:
\begin{itemize}
  \item For an input gate $g$ of the form $n:a$, we have $\SS(g) \coloneq \{n
    \mapsto a\}$;
  \item For an internal $\uplus$-gate $g$, letting $g_1, \ldots,
    g_k$ be its input gates, we have $\SS(g) \coloneq \biguplus_i \SS(g_i)$;
  \item For an internal $\times$-gate $g$, letting $g_1, \ldots,
    g_k$ be its input gates, we have $\SS(g) \coloneq \bigtimes_i \SS(g_i)$.
\end{itemize}
The $\times$ operator 
denotes the Cartesian product,
where we identify a tuple of functions $(f_1, \ldots, f_k) \in \bigtimes_i
\SS(g_i)$ with the function $f$ defined like each $f_i$ on $\dom(f_i) =
\dom(g_i)$,
noting that by decomposability these domains are disjoint.
In particular, if $g$ is a $\uplus$-gate with no inputs then $\SS(g) \coloneq
\emptyset$, and if $g$ is a $\times$-gate with no inputs then $\SS(g) \coloneq
\{\emptyset_\Sigma\}$ for $\emptyset_\Sigma$ the function to~$\Sigma$ with empty domain.
We will write $\SS(C)$ to mean $\SS(g_0)$ for $g_0$ the output gate of~$C$.
We then say that an $\uplus$-gate $g$ is \emph{disjoint} if for any two inputs
$g'$ and $g''$ of~$g$, the sets $\SS(g')$ and $\SS(g'')$ are disjoint. We
require that $C$ is \emph{disjoint}, i.e., every $\uplus$-gate is.
Notice then that, with these definitions, and in particular thanks to smoothness, the partial assignments of $S(C)$
are in fact \emph{total} assignments over $\dom(C)$.

Given a probabilistic word $p = (p_1, \ldots, p_n)$, and a circuit $C$ with
domain $\dom(C) = \{1, \ldots, n\}$, the \emph{probability} $p(C)$ of~$C$ under~$p$ is
the sum of the probabilities of the assignments of $\SS(C)$ under~$p$, where the
probability of $f\colon \dom(C) \to \Sigma$ under $p$ is simply $p(w_f)$ for the
word $w_f = f(1) \cdots f(n)$.
The point of $\times,\uplus$-circuits $C$ is that we can tractably compute the
probability of~$C$ under a probabilistic word~$p$ without materializing the
(potentially exponential) set $\SS(C)$ of assignments that $C$ captures.
This is the immediate analogue of the tractability of model counting for
\emph{d-DNNFs}~\cite{darwiche2001tractable}:

\begin{proposition}[\cite{darwiche2001tractable}]
  \label{prp:circuittract}
  Given a $\times,\uplus$-circuit $C$ and 
  a 
  probabilistic word $p = (p_1,
  \ldots, p_n)$, we can compute $p(C)$
  in linear time (assuming unit cost for arithmetic operations).
\end{proposition}
\begin{proof}
  We do a bottom-up induction on the gates $g$ of~$C$ to compute the probability
  of~$\SS(g)$ under~$p$:
  \begin{itemize}
    \item For input gates $g$ labeled $i:a$, the probability is simply $p_i(a)$;
    \item For $\uplus$-gates $g$, the probability of $\SS(g)$ is the sum of the
      $\SS(g_i)$ for all inputs $g_i$ of~$g$. Here, the disjointness requirement
      ensures that 
      the union that defines $\SS(g)$ is a disjoint union. Thus, the
      sum is correct because the partial functions in the $\SS(g_i)$ all
      correspond to mutually exclusive outcomes of the probabilistic word
      (this also uses the smoothness requirement to argue that they have the same domain);
    \item For $\times$-gates $g$, the probability of~$\SS(g)$ is the product of
      the probabilities of the $\SS(g_i)$
      for all inputs $g_i$ of~$g$. Here, the decomposability
      requirement ensures that the $\SS(g_i)$ for the inputs $g_i$ of~$g$ are on
      disjoint domains, so they correspond to disjoint sets of positions of~$p$
      and thus to independent outcomes. \qedhere
  \end{itemize}
\end{proof}

\paragraph{Languages admitting tractable circuits}
For $n \in \NN$, we say that a $\times,\uplus$-circuit \emph{captures the $n$-th slice of} a
language $L$ if $\dom(C) = \{1, \ldots, n\}$ and $\SS(C)$ precisely
corresponds to the words of $L \cap \Sigma^n$, i.e., a function $f \colon \{1,
\ldots, n\} \to \Sigma$ is in $\SS(C)$ precisely if the word $f(1) \cdots f(n)$
is in $L$.
We say that a language \emph{admits tractable circuits} if there is an algorithm~$A$ running in polynomial time in its input $n \in \NN$ which computes a
$\uplus,\times$-circuit $A(n)$ that captures the $n$-th slice of~$L$. 
(Note that this is stronger than requiring the mere existence of
polynomial-sized circuits: we assume that they can also be computed in
polynomial time.)
The following is
then a direct consequence of \cref{prp:circuittract}:

\begin{proposition}
  \label{prp:circuitprob}
  If a language $L$ admits tractable circuits then $\probone{L}$ is in PTIME.
\end{proposition}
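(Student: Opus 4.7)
The plan is to combine the definition of ``admits tractable circuits'' with \cref{prp:circuittract} in the obvious way. Given an input probabilistic word $p = (p_1, \ldots, p_n)$, I would first invoke the polynomial-time algorithm $A$ witnessing that $L$ admits tractable circuits on the input $n$ (written in unary, which is polynomial in the size of~$p$), to produce a $\times,\uplus$-circuit $C = A(n)$ that captures the $n$-th slice of~$L$. Because $A$ runs in polynomial time in~$n$, the circuit $C$ has size polynomial in~$n$.

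Next I would invoke \cref{prp:circuittract} on $C$ and~$p$, which computes $p(C)$ in linear time in $|C|$ up to arithmetic operations. Composing the two polynomial-time steps gives a polynomial-time algorithm whose output is $p(C)$.

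The only thing left is to check correctness, i.e., that $p(C) = p(L)$. By definition of a probabilistic word, we have $p(w) = 0$ for all words $w$ of length different from~$n$, so $p(L) = \sum_{w \in L \cap \Sigma^n} p(w)$. On the other hand, by definition of ``captures the $n$-th slice'', the set $\SS(C)$ is in bijection with $L \cap \Sigma^n$ via $f \mapsto w_f = f(1) \cdots f(n)$, and $p(w_f)$ equals the probability of~$f$ under~$p$ used in the definition of~$p(C)$. Hence $p(C)$ equals $\sum_{w \in L \cap \Sigma^n} p(w) = p(L)$, which concludes. No step is really an obstacle here: this proposition is a direct packaging of \cref{prp:circuittract} together with the definition of admitting tractable circuits, and all the real work has been done in proving \cref{prp:circuittract}.
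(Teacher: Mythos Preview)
Your proposal is correct and matches the paper's approach exactly: the paper states this proposition as ``a direct consequence of \cref{prp:circuittract}'' without further elaboration, and what you have written is precisely the unpacking of that direct consequence (compute the circuit for the given length, then apply \cref{prp:circuittract}, and observe that capturing the $n$-th slice makes $p(C) = p(L)$). There is nothing to add.
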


Which languages admit tractable circuits? In fact, all 
poly-slicewise-unambiguous languages do, because we can build tractable circuits
for uCFLs:

\begin{proposition}
  \label{prp:cykcircuit}
  Given a uCFL $G$ and an integer $n$, we can build in time $O(|G| n^3)$ a
  $\times,\uplus$-circuit that captures the $n$-th slice of~$\LL(G)$.
\end{proposition}
\begin{proof}
  Let $\Gamma = (N, S, P)$ be a uCFG. As explained in the proof sketch of
  \cref{prp:cyk}, up to
  linear-time preprocessing of the input uCFG, we assume that it is in arity-two
  normal form (2NF)
  \cite{lange2009cnf}. 

We then compute the nullable non-terminals as in the proof of \cref{prp:polyslender}.
  In the case where $n = 0$, the circuit to compute is either the circuit $C$
  with $\SS(C) = \emptyset$ that captures no assignments, or the circuit $C$
  with $\SS(C) = \{\emptyset_\Sigma\}$ that captures only the function
  $\emptyset_\Sigma$ with empty domain, depending on whether $\epsilon \in \LL(\Gamma)$ or not,
  i.e., depending on whether the axiom is nullable. 
  In the first case, we take $C$
  to be a circuit consisting of a $\uplus$-gate with no inputs, and in the
  second case we take $C$ to be a circuit consisting of a $\times$-gate with no
  inputs. So in the sequel we assume $n > 0$.

  We then eliminate epsilon productions as in the proof of \cref{prp:polyslender},
  obtaining in linear time 
  a 2NF CFG $\Gamma'$
  where $\Gamma'$ contains no nullable nonterminals, and where 
  we have $\LL(\Gamma') \setminus \{\epsilon\} = \LL(\Gamma) \setminus
  \{\epsilon\}$ (i.e., the recognized languages are the same possibly up to the
  empty word epsilon). We moreover show that this process ensures that $\Gamma'$ is still an unambiguous CFG. We
  show that this is true by describing a bijection between the derivation trees
  of~$\Gamma$ and those of $\Gamma'$ on any non-empty word~$w$.

  In the forward direction, any derivation tree for $\Gamma$ on~$w$ can be
  translated to a derivation tree for $\Gamma'$ on~$w$ by simply removing all
  nodes of the tree that derive $\epsilon$, and changing the nodes where we
  derived a non-empty word by applying a rule $X \rightarrow YZ$ with one of $Y$
  and $Z$ deriving the empty word, to use instead the rule $X \rightarrow Y$ or
  $X \rightarrow Z$. In the backward direction, any derivation tree for
  $\Gamma'$ on~$w$ can be translated to a derivation tree for~$\Gamma$ by
  replacing the nodes labeled with the new productions $X \rightarrow Y$ or $X
  \rightarrow Z$ of $\Gamma'$ created from a production $X \rightarrow YZ$ of
  $\Gamma$, by inserting a derivation of~$\epsilon$ from the missing nonterminal
  $Y$ or $Z$: as $\Gamma$ was unambiguous there is in particular only one
  derivation tree of~$\epsilon$ from the missing nonterminal. The two
  translations described are inverses of each other, so we have indeed described
  a bijection, establishing that the languages of $\Gamma$ and $\Gamma'$ on
  non-empty words are the same, and that $\Gamma'$ is still an unambiguous CFG.

  Next, we preprocess $\Gamma'$ to make sure that all nonterminals are \emph{useful},
  using the same method as in the proof of \cref{prp:polyslender}.

  We now compute an order on the nonterminals to explain in which order they
  are considered in the algorithm and make sure that this order is acyclic.
  Specifically, we let $<$ be an order on nonterminals that ensures that
  whenever we have the production $X \rightarrow Y$ then $X > Y$. Such an order
  can be obtained following a topological sort of the directed graph induced by
  the productions of the form $X \rightarrow Y$, noting that this directed graph
  must be acyclic because $\Gamma'$ is unambiguous (this was not the case for \cref{prp:polyslender}) and all its nonterminals are
  useful. Indeed, assuming by contradiction that we have a cycle $X_1
  \rightarrow X_2, \ldots, X_k \rightarrow X_1$, taking any derivation tree
  witnessing that $X_1$ is useful, we would obtain a different derivation tree
  for the same word by replacing the node labeled $X_1$ by a path obtained by
  following these productions in a loop, and this would contradict the fact that
  $\Gamma'$ is unambiguous.

  We can now describe the construction of the circuit $C$ on an input length~$n$
  for the unambiguous 2NF CFG $\Gamma'$ with no nullable nonterminals and with
  order $<$, with $\Gamma'$ and $<$ having been constructed in linear time
  from~$\Gamma$. For every $1 \leq i \leq j \leq n$ and every nonterminal $X$ we
  construct a gate $g_{X,i,j}$ with $\dom(g_{X,i,j}) = \{i, \ldots, j\}$ which
  ensures the following invariant (*): $\SS(g_{X,i,j})$ is the set of functions corresponding to the
  words on positions $\{i,\ldots,j\}$ that can be derived from~$X$. This gate is
  an $\uplus$-gate of gates $g_{X,i,j,P}$, across all productions $P$ having $X$
  as its left-hand-side. For each such production:
  \begin{itemize}
    \item If the production is $X \rightarrow a$ with $a$ a terminal, then:
      \begin{itemize}
        \item If $i=j$, then $g_{X,i,j,P}$ is an input gate labeled by $i:a$;
        \item Otherwise, $g_{X,i,j,P}$ is an $\uplus$-gate with no inputs
      \end{itemize}
    \item If the production is $X \rightarrow Y$ with $Y$ a nonterminal, then
      $g_{X,i,j,P}$ is the gate $g_{Y,i,j}$;
    \item If the production is $X \rightarrow YZ$ with $Y$ and $Z$ nonterminals,
      then $g_{X,i,j,P}$ is an $\uplus$-gate of $\times$-gates $g_{X,i,j,P,k}$
      across all $i \leq k < j$, with the two inputs to $g_{X,i,j,P,k}$ being
      $g_{Y,i,k}$ and $g_{Z,k+1,j}$.
  \end{itemize}
  The output gate of~$C$ is $g_{S,1,n}$, for $S$ the axiom.

  The fact that $C$ is acyclic is witnessed by the order $<$
  among those gates having the same positions $i,j$ as indices, and otherwise
  by noting that we only have a path from a gate having $i,j$ as indices to
  a gate having $i',j'$ as indices if the interval $[i,j]$ is included in
  $[i',j']$.

  The decomposability and smoothness of~$C$ are easy to
  observe from the invariant that the domain of a gate having $i,j$ as indices
  is always exactly $\{i,\ldots,j\}$.

  The fact that $C$ indeed captures the $n$-th slice of $\LL(\Gamma')$ follows
  from invariant (*) highlighted above.

  The disjointness of~$C$ follows from the unambiguity of~$\Gamma'$: from
  invariant (*), a violation of disjointness would imply that the same word $W$
  can be derived from the same nonterminal $X$ in two different ways (either via
  two different productions, or with the same production $X \rightarrow YZ$ with
  two different splits between $Y$ and $Z$), all of which imply a violation of
  unambiguity by considering some parse tree of $\Gamma'$ on a word $w'$ in
  which $X$ derives~$w$, and building a different parse tree for~$w'$. This
  concludes the proof.

  We last observe that the construction above runs in time $O(n^3 |\Gamma'|)$
  with $|\Gamma'|$ linear in $|\Gamma|$, which establishes our desired
  running-time bound.
\end{proof}

This implies the following on poly-slicewise-unambiguous languages
(in particular uCFLs):

\begin{corollary}
  \label{cor:poly2circ}
  Any poly-slicewise-unambiguous language has tractable
  circuits.
\end{corollary}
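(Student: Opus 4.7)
The plan is to compose the two ingredients already established, namely the algorithm witnessing poly-slicewise-unambiguity and the circuit construction of \cref{prp:cykcircuit}. Concretely, given a poly-slicewise-unambiguous language $L$ with witnessing algorithm $\mathcal{A}_L$, I would define the algorithm $A$ required by the definition of ``admits tractable circuits'' as follows: on input $n \in \NN$ (in unary), first call $\mathcal{A}_L(n)$ to obtain in polynomial time in~$n$ a uCFG $\Gamma_{L,n}$ such that $\LL(\Gamma_{L,n}) \cap \Sigma^n = L \cap \Sigma^n$; then invoke \cref{prp:cykcircuit} on $\Gamma_{L,n}$ and $n$ to produce a $\times,\uplus$-circuit $C_n$ that captures the $n$-th slice of $\LL(\Gamma_{L,n})$.

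The correctness check is immediate: by \cref{prp:cykcircuit} the set $\SS(C_n)$ corresponds exactly to the words of $\LL(\Gamma_{L,n}) \cap \Sigma^n$, and by the defining property of $\mathcal{A}_L$ this coincides with $L \cap \Sigma^n$, so $C_n$ captures the $n$-th slice of~$L$ as required. For the running time, if $\mathcal{A}_L$ runs in time $O(n^k)$ then $|\Gamma_{L,n}|$ is also $O(n^k)$, and \cref{prp:cykcircuit} then produces $C_n$ in time $O(n^{k+3})$, which is polynomial in~$n$; the whole pipeline $A$ is thus polynomial, as needed.

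There is no real obstacle here: the corollary is a one-line composition of the two preceding results, and everything that could have been hard (handling nullable nonterminals, cycles of unit productions, preserving unambiguity along the normal-form transformation, establishing decomposability/smoothness/disjointness of the produced circuit) has already been done inside the proof of \cref{prp:cykcircuit}. The only subtlety worth noting is that \cref{prp:cykcircuit} builds a circuit capturing the $n$-th slice of the full language $\LL(\Gamma_{L,n})$, whereas poly-slicewise-unambiguity only guarantees agreement with $L$ on length-$n$ words; but since the circuit is restricted to $\dom(C_n) = \{1,\ldots,n\}$ and hence only captures length-$n$ assignments, this suffices.
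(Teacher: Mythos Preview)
Your proposal is correct and is exactly the intended argument: the paper treats this corollary as an immediate consequence of \cref{prp:cykcircuit}, obtained by composing the algorithm $\mathcal{A}_L$ from the definition of poly-slicewise-unambiguity with the circuit construction of \cref{prp:cykcircuit}. Your attention to the fact that agreement on $\Sigma^n$ suffices (since the circuit's domain is $\{1,\ldots,n\}$) is the only point worth making explicit, and you have handled it correctly.
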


\paragraph{A tractable non-poly-slicewise-unambiguous CFL}
We next show that, even within the class of CFLs, the converse of
\cref{cor:poly2circ} does not hold. Thus,
the tractability of $\prob$ for languages admitting tractable circuits
(\cref{prp:circuitprob}) strictly generalizes the same result for
poly-slicewise-unambiguous languages (\cref{prp:poly}). 
For this, consider the language
$L_3 =
\{(a+b)^k a (a+b)^n a (a+b)^{n-k-2} \mid k, n \in \NN, 0 \leq k < n-1\}$
on
$\Sigma = \{a,b\}$,
which consists of the words of length
$2n$ which contain two occurrences of~$a$ separated by exactly $n$ characters.
This language is a CFL which is inherently
ambiguous \cite[Proposition 5.8]{kimelfeld2025formal}. It is shown in 
\cite[Theorem 1]{mengel2025lower} that this language is not
poly-slicewise-unambiguous: they show a lower bound of $2^{\Omega(n)}$ on the
size of any uCFG accepting $L \cap \Sigma^n$, so in particular there is no
algorithm computing such uCFGs in polynomial time in~$n$. However:

\begin{claim}
  \label{clm:hardcircuits}
  The language $L_3$ admits tractable circuits.
\end{claim}
\begin{proof}
  Let $n \in \NN$ be the input length. We will explain how to compute in time
  $O(n)$ a $\times,\uplus$-circuit $C$ that captures the $n$-th slice of $L_3$.
  If $n$ is odd then the $n$-th slice is empty, which can be captured by a
  circuit $C$ with $\SS(C) = \emptyset$. A slight technicality is that,
  to match our definitions, we
  must ensure that $\dom(C) = \{1, \ldots, n\}$; we can achieve this, e.g.,
  with a $\times$-gate $g$ whose inputs are input gates $i: a$ for all $i \in \{1,
  \ldots, n\}$ and some arbitrary $a \in \Sigma$, along with an $\uplus$-gate
  $g'$ 
  with no inputs. This ensures that $\dom(C) = \dom(g) = \{1, \ldots, n\}$ and
  that $\SS(C) = \SS(g) = \emptyset$ because $\SS(g') = \emptyset$. 
  Hence, in the sequel we assume that $n = 2k$.

  The circuit will contain gates $g_1, \ldots, g_k$ with
  $\dom(g_i) = \{i, \ldots, k\} \cup \{k+i, \ldots, 2k\}$ for each $1 \leq i
  \leq k$, satisfying the following invariant (*): $\SS(g_i)$ for each $i$
  describes the choices of letters on the
  positions of $\dom(g_i)$ that ensure that there are two $a$'s at distance $k$
  from each other within the positions $\dom(g_i)$.

  The base case is that $g_k$ is a $\times$-gate of the inputs $k:a$ and $2k:a$,
  which is obviously correct.

  For the induction, let $1 \leq i < k$. To satisfy invariant (*), we use the
  following characterization: either the $i$-th and $(k+i)$-th letter of the
  word are both $a$'s, or one of them is a $b$ and the word satisfies $g_{i+1}$.
  Note that these two choices are mutually exclusive. So we make $g_i$ an
  $\uplus$-gate of four gates $g_{i,x,y}$ for $x,y \in \{a,b\}$, each
  $g_{i,x,y}$ being a $\times$-gate of the two input gates $i:x$ and $i+k:y$ and
  of the following:
  \begin{itemize}
    \item If $x=b$ or $y=b$ (or both), the gate $g_{i+1}$;
    \item Otherwise, if $x=a$ and $y=a$, no additional gate.
  \end{itemize}

  The output gate of the circuit is $g_1$. The decomposability and smoothness of
  the circuit follows by a straightforward induction, and its correctness and
  disjointness follow easily from invariant (*).
\end{proof}

This implies that $\probone{L_3}$ is
tractable.
Intuitively, 
the proof uses the fact that
tractable circuits are not constrained to reading the
word in left-to-right fashion, or in any predefined order: this is in contrast
with uCFGs and with poly-slicewise-unambiguous languages.
This also resembles formalisms such as \emph{Multiple Context Free Grammars}
(MCFGs)~\cite{SEKI1991191}, which generalize CFGs by
deriving tuples of strings instead of strings: MCFGs can recognize
complex parenthetical structures~\cite{kanazawa2016parenthesis}, languages
that do not satisfy strong forms of pumping lemmas~\cite{kanazawa2014pumping},
or images of regular tree languages by MSO tree-to-string
transductions~\cite{kolb2003MSOtrans_language}.
In fact, since MCFGs
admit a generalization of the CYK algorithm~\cite{SEKI1991191}, we could easily
generalize
\cref{prp:cykcircuit} to unambiguous MCFGs. This would recapture
the tractability of $\probone{L_3}$, and also show tractability for
other languages definable by unambiguous MCFGs.

\section{Tractable Circuits with Complementation}
\label{sec:palindromes}
We have introduced the notion of a language \emph{having tractable circuits},
and showed that this implies the tractability of $\prob$. In this section, we
show how to extend tractable circuits with a \emph{complementation
operator}, intuitively corresponding to Boolean negation, and we show that
languages enjoying such circuits are still tractable. We then apply this
technique to show the tractability of $\prob$ for two languages: the
context-free language $\pal^2$ of the concatenation of two palindromes, and the
language of primitive words. It is known that neither of these languages admits
an unambiguous CFG, so their tractability would not follow from \cref{prp:cyk};
but we do not know whether these languages could be handled with the technique
of poly-slicewise-unambiguity, or with tractable circuits without
complementation.

The section is structured as follows. We first present the notion of circuits with
complementation. Then, we first use such circuits to show the tractability of
$\prob$ on the language of
primitive words, because the proof is
simpler. We then turn to the language $\pal^2$.

\paragraph{Circuits with complement gates}
The motivation for extending $\times,\uplus$-circuits with complement gates is
that complements of tractable languages for $\prob$
are always tractable:

\begin{claim}
  \label{clm:complement}
  For any language $L$,
  the problem 
  $\probone{\Sigma^* \setminus L}$ reduces in PTIME to $\probone{L}$.
\end{claim}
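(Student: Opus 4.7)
The plan is to observe that for any probabilistic word $p$ of length $n$, the total probability mass is concentrated on words of length exactly $n$, so $p(\Sigma^*) = p(\Sigma^n) = 1$. Since $L$ and $\Sigma^* \setminus L$ partition $\Sigma^*$, we get the identity $p(\Sigma^* \setminus L) = 1 - p(L)$.

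The reduction is then immediate: given an input probabilistic word $p$ to $\probone{\Sigma^* \setminus L}$, invoke the oracle for $\probone{L}$ on $p$ to obtain $p(L)$, and return $1 - p(L)$. This is a single oracle call plus one subtraction, hence runs in polynomial time (indeed, linear time up to the cost of a single arithmetic operation).

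The only subtlety worth spelling out is why $p(\Sigma^*) = 1$: by the definition of probabilistic words, $p(w) = 0$ for every word $w$ whose length differs from $n = |p|$, and by expanding the product $\prod_{k=1}^n p_k(w_k)$ over all $w \in \Sigma^n$ and using the fact that each $p_k$ is a probability distribution on~$\Sigma$, we get $\sum_{w \in \Sigma^n} p(w) = \prod_{k=1}^n \sum_{a \in \Sigma} p_k(a) = 1$. There is no real obstacle here; the statement is essentially a one-line consequence of the definitions, and its role in the paper is simply to motivate why adding a complementation operator to the circuit formalism in the subsequent development is a natural step.
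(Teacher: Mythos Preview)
Your proof is correct and follows exactly the same approach as the paper: compute $p(L)$ via the oracle and return $1 - p(L)$. The paper's proof is a single line stating this identity, while you additionally spell out why $p(\Sigma^*) = 1$, a fact the paper already noted in the preliminaries.
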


\begin{proof}
  The answer to $\probone{L'}$ on~$p$ is simply $1 - A$, for $A$ the answer
  to~$\probone{L}$ on $p$.
\end{proof}

Thus, all our tractability results on languages immediately extend to their
complements (even though the complement of a CFL, or
indeed of a uCFL, is generally not a CFL~\cite{hibbard1966independence}).

However, beyond the use of negation at the top-level, we could also support
negation as an arbitrary intermediate gate. This motivates
the definition of \emph{$\times,\uplus,\compl$-circuits}:

\begin{definition}
  \label{def:neg}
  A \emph{$\times,\uplus,\compl$-circuit} on an alphabet $\Sigma$ is defined like a
  $\times,\uplus$-circuit in \cref{sec:circuits} but where we additionally allow
  \emph{$\compl$-gates}. Such gates must have only one input, and their
  semantics is defined as follows: for $g$ a $\compl$-gate with input $g'$,
  let $S$ be the set of
  all functions from $\dom(g) = \dom(g')$ 
  to~$\Sigma$. Then $\SS(g) \coloneq S \setminus \SS(g')$.
\end{definition}

These gates correspond to negation gates for Boolean circuits, and 
$\times,\uplus,\compl$-circuits can be seen as a multivalued analogue of
the (smooth) \emph{d-D} circuits recently studied
in probabilistic databases~\cite{monet2020solving}. We can
immediately generalize \cref{prp:circuittract} to show:

\begin{proposition}
  \label{prp:circuittract2}
  Given a $\times,\uplus,\compl$-circuit $C$, and a probabilistic word $p = (p_1,
  \ldots, p_n)$, we can compute $p(C)$
  in linear time (assuming unit cost for arithmetic operations).
\end{proposition}
\begin{proof}
  We do the same inductive proof as in \cref{prp:circuittract} with the
  following extra case: for a $\compl$-gate $g$ with input $g'$, the probability
  for~$g$ is simply $1-\eta$, for $\eta$ the probability of~$g'$.
\end{proof}

Hence, we can use $\times,\uplus,\compl$-circuits to show the tractability
of~$\prob$. We say that such a circuit \emph{captures the $n$-th
slice of a language}~$L$ by the obvious generalization of the definition of
\cref{sec:circuits}, and that $L$ \emph{admits tractable
$\times,\uplus,\compl$-circuits} if there is an algorithm which given a length
$n$ computes in polynomial time in~$n$ a $\times,\uplus,\compl$-circuit that
captures the $n$-th slice of~$L$. By the immediate generalization of
\cref{prp:circuitprob}, this implies tractability for $\prob$.

While we will use complement gates in this section, we do not know whether they
are necessary, i.e., whether $\times,\uplus,\compl$-circuits capture the
tractability of~$\prob$ for more languages than $\times,\uplus$-circuits; we come back to this point in the
conclusion.
We also note that the support of negation in relation with (u)CFLs seems
superficially similar to the notions of
\emph{conjunctive grammars} and
\emph{Boolean grammars}~\cite{okhotin2013conjunctive}, in
particular to unambiguous such
grammars~\cite{okhotin2008unambiguous}. However, the semantics of Boolean
grammars are different, because they allow in particular conjunctions that range
over the entire word (and so are not decomposable). Thus, unambiguous
conjunctive grammars include, e.g., the intersections of two uCFGs;
but $\prob$ can be intractable for such languages
(see the proof of \cref{prp:hard2}). This suggests that the
tractability of unambiguous Boolean grammars (e.g., for parsing) does not extend
to counting, so there is no obvious connection to our results.

We will now show that the language of primitive words admits tractable
$\times,\uplus,\compl$-circuits in the sense above, and will then show the same for the
language $\pal^2$.

\paragraph{Primitive words}
Let us first define the language $L_\prim$ of primitive words.  We fix an
arbitrary alphabet $\Sigma$.
A word $w \in \Sigma^*$ is \emph{composite} if there exist $u \in
\Sigma^*$ and $k \geq 2$ such that $w = u^k$. If $w$ is not composite, then we
say that $w$ is \emph{primitive}, i.e., 
a word is primitive if it is not a proper power of another word.
In particular, the empty word is not primitive. The language $L_\prim$ consists
of all primitive words over the alphabet~$\Sigma$.

We note that, over non-unary alphabets, it is not currently known whether
$L_\prim$ is a CFL or not~\cite{ito2014context}.
However, it is known that $L_\prim$ is not a uCFL~\cite{petersen1994ambiguity,koechlin2022new}, and also that $L_\prim$ is not a linear
CFL~\cite{horvath1995strong}.
We do not know whether
$L_\prim$ is poly-slicewise-unambiguous, or whether it admits tractable circuits
without complementation.
Our goal is to show the tractability of $\prob$ for primitive words (and hence
for composite words, by \cref{clm:complement}). Namely:

\begin{proposition}
  \label{prp:prim}
 The problem $\probone{L_\prim}$ can be solved in data
 complexity $O(n^2 |\Sigma|)$ up to the cost of
 arithmetic operations, with $n$ the length of the input word.
\end{proposition}

We prove this result before turning to the language $\pal^2$, which
will use a similar technique. Our proofs rely on the 
standard notion of the (primitive) root and of the order of a word.
For the uniqueness and well-definedness of these notions, see
\cite[Theorem~2.3.4]{shallit2008second}:

\begin{definition}
  For any word $u \in \Sigma^* \setminus \{\epsilon\}$, the \emph{root} of~$u$ is the unique primitive word
  $w$ such that we have $u = w^d$ for a certain integer $d \geq 1$, called the
  \emph{order} of~$u$. Note that if $w$ is primitive then $w=u$ and $d=1$.
\end{definition}

Our algorithm to solve $\probone{L_\prim}$ will proceed by distinguishing words
according to their order. For this, we introduce two families of languages:

\begin{definition}
  \label{def:lm}
    Let $k > 0$. We write $L_k$ for the set of words over $\Sigma^*$
    of order equal to $k$. In particular, $L_1 = L_\prim$. Note that the $L_k$
    are a partition of $\Sigma^*$.

    We write $M_k$ for the set of non-empty words $u$ over $\Sigma^*$ that can be written
    as $u = v^k$ for some $v \in \Sigma^*$ (not necessarily primitive). Note
    that $M_k = \bigcup_{d \geq 1} L_{dk}$.
\end{definition}

To solve $\probone{L_\prim}$, we first show that 
$M_k$ admits tractable $\times,\uplus$-circuits for all $k \in \NN$:

\begin{lemma}
     \label{lem:pasprimi}
     Given two integers
     $1 \leq k \leq n$, we can compute in time $O(n \times |\Sigma|)$ a
     $\times,\uplus$-circuit that captures the $n$-th slice of~$M_k$.
\end{lemma}
\begin{proof}
  If $n$ is not a multiple of $k$, then $M_k \cap \Sigma^n$ is empty: words of
  length $n$ never belong to~$M_k$ because they can never be written as $v^k$
  for some $v \in \Sigma^*$. Thus, we can simply use a circuit $C$ with $\SS(C)
  = \emptyset$; we can ensure that $\dom(C)$ is suitable as we did in the
  beginning of the proof of \cref{clm:hardcircuits}. Hence, we assume in the
  sequel that $n$ is a multiple of~$k$.

  Otherwise, let $d \coloneq n/k$, which is an integer. A word $v$ of length
  $n$ belongs to $M_k$ precisely when the $j$-th letter between each of the $k$
  blocks of length $d$ are all equal, formally, when we have $v_{j+pd} =
  v_{j+qd}$ for each choice of $1 \leq j \leq d$ and $0 \leq p \leq q \leq k-1$.
  We can build a $\times,\uplus$-circuit in the following way. The output of the
  circuit is a 
  $\times$-gate $g$ with inputs $g_1, \ldots, g_d$ for each $1 \leq j \leq d$. We
  will ensure that $\dom(g_j) = \{j+pd\mid 0 \leq p \leq k-1\}$ which ensures
  that $g$ is decomposable. For each gate $g_j$, we will ensure that
  $\SS(g_j)$ precisely contains the partial
  assignments of $\dom(g_j)$ that ensure that all the positions of $\{j+pd \mid
  0 \leq p \leq k-1\}$ contain the same value: this ensures correctness of the
  circuit. For this, we make $g_j$ a $\uplus$-gate with inputs $g_{j,a}$ for
  each $a \in \Sigma$. We will ensure that $\dom(g_{j,a}) = \dom(g_j)$ (ensuring
  smoothness) and will ensure that $\SS(g_{j,a})$ precisely contains the
  partial assignments of $\dom(g_j)$ where the positions of $\{j+pd \mid
  0 \leq p \leq k-1\}$ all contain~$a$. This is correct and ensures that $g_j$
  is deterministic. Now, $g_{j,a}$ is simply a $\times$-gate of input gates
  $j+pd:a$ across all $0 \leq p \leq k-1$. This is decomposable and satisfies
  the requirement. The overall construction is in time $O(n \times |\Sigma|)$,
  which concludes.
\end{proof}

Now, we can construct circuits capturing $L_\prim = L_1$, and in fact $L_k$ for
each $k$, from the circuits capturing the $M_k$. Namely, we show the following, which allows us to conclude the proof of Proposition~\ref{prp:prim}:

\begin{claim}
  \label{lem:ie}
Assume that we have pairwise disjoint sets $L_1', \ldots, L_n'$ of words of length $n$, and define for
each $1 \leq i \leq n$ the sets $M_i' \coloneq \bigcup_{1 \leq d \leq n/i} L_{di}'$.
Given $\times,\uplus$-circuits $C_i$ computing the \(n\)-th slice of $M_i'$ for
each \(1 \leq i \leq n\), we can build in
time $O(n^2 + \sum_i |C_i|)$ a $\times,\uplus,\compl$-circuit $C'$ computing
the \(n\)-th slice of $L_1'$.
\end{claim}

Before proving the result, we restate for convenience the relationship between
the language families $M_k$ and $L_k$ from the main text. For any $k \geq
1$, we have:
\begin{equation}
  \label{eqn:lfromm}
  M_k = \bigcup_{d \geq 1} L_{dk}.
\end{equation}

In particular $M_1 = \Sigma^*\setminus \{\epsilon\}$.
What is more, for any length $n$ we have:
\begin{equation}
  \label{eqn:lfromm2}
M_k \cap \Sigma^n = \bigcup_{1 \leq d \leq n/k} L_{dk} \cap
\Sigma^n,
\end{equation}
noting that the union can now be taken to be a finite union over $1
\leq d \leq n/k$ because the words of $\Sigma^n$ have order at
most~$n$. Hence, we will want to apply Claim~\ref{lem:ie} with $L_i' = L_i \cap \Sigma^n$ and $M'_i = M_i \cap \Sigma^n$, which will give us a circuit for the $n$-th slice of $L_\prim$.

We will then use the following lemma on the possibility to do
\emph{subset-complementation} operations with $\compl$-gates:

\begin{claim}
  \label{clm:subsetcompl}
  From two $\times,\uplus,\compl$-circuits $C$ and $C'$ with $\dom(C) =
  \dom(C')$ and $\SS(C') \subseteq \SS(C)$, we can add constantly many gates to
  obtain a circuit $C''$ with $\SS(C'') = \SS(C) \setminus \SS(C')$.
\end{claim}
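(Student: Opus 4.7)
The plan is to construct $C''$ using a small gadget of three extra gates on top of $C$ and $C'$, based on the set-theoretic identity $\SS(C)\setminus \SS(C') = \overline{\,\overline{\SS(C)}\cup \SS(C')\,}$, where the bar denotes complementation with respect to the set $S$ of all functions from $\dom(C)=\dom(C')$ to $\Sigma$. This identity uses the hypothesis $\SS(C')\subseteq \SS(C)$ in an essential way, since it is precisely what guarantees that the union $\overline{\SS(C)}\cup\SS(C')$ is a disjoint union.

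Concretely, I would let $g_1 \coloneq \compl(g_C)$ for $g_C$ the output gate of $C$, then let $g_2$ be a fresh $\uplus$-gate with the two inputs $g_1$ and the output gate $g_{C'}$ of $C'$, and finally take $C''$ to be a $\compl$-gate with unique input $g_2$. The correctness of the construction then reduces to checking three things: (i) the new $\uplus$-gate $g_2$ is smooth, which is immediate from $\dom(g_1)=\dom(C)=\dom(C')$; (ii) $g_2$ is disjoint, which is exactly where the hypothesis $\SS(C')\subseteq\SS(C)$ enters, since it gives $\SS(C')\cap\overline{\SS(C)}=\emptyset$; and (iii) $\SS(C'')=\SS(C)\setminus\SS(C')$, which follows by unfolding the semantics of $\compl$ twice and using the inclusion one more time.

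I do not expect any real obstacle here. Decomposability is preserved because no $\times$-gate is added; the circuit remains acyclic since all new gates sit on top of $C$ and $C'$; and only three gates are introduced, giving the claimed constant bound on the number of added gates. The only delicate point is the disjointness condition (ii): it is the unique step in the verification that genuinely consumes the assumption $\SS(C')\subseteq\SS(C)$, and without it the same syntactic construction would in general violate the disjointness requirement and compute the wrong set of assignments.
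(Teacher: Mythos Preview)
Your proposal is correct and matches the paper's own proof essentially verbatim: the paper uses the same identity $V\setminus U=\overline{\overline{V}\uplus U}$ for $U\subseteq V$, builds the same three-gate gadget ($\compl$, then $\uplus$, then $\compl$), and verifies smoothness from $\dom(C)=\dom(C')$, disjointness from $\SS(C')\subseteq\SS(C)$, and that decomposability is unaffected.
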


\begin{proof}
  We use the following set-theoretic identity: for all $U \subseteq V$ we have
  $V \setminus U = \overline{\overline{V} \uplus U}$ where the union is a
  disjoint union.

  So we take a complement of the output gate of~$C$ with a $\compl$-gate, we do
  the disjoint union with the output gate of~$C'$, and then take again the
  complement with a $\compl$-gate. This does not affect decomposability, we have
  argued that it does not violate disjointness because the union in the
  paragraph above is disjoint, and smoothness is satisfied by the assumption
  $\dom(C) = \dom(C')$ in the lemma statement.
\end{proof}

We can now prove \cref{lem:ie}:

\begin{proof}[Proof of \cref{lem:ie}]
  We prove by downwards induction on $n \geq i \geq 1$ that we can build a
  circuit computing $L_i'$ for each $i$. (Note that, to ensure the polynomial
  running time, we do not build copies of the sub-circuits at every step, but
  re-use at each step the circuits computed for larger~$i$.)

  For the base case of $i=n$, we clearly have that $M_n' =
  L_n'$, so we can simply use the circuit for $M_n'$.

  For smaller $i$, 
  distinguish the case $d=1$ in the union, we have:
  \[
    M_i' = L_i' \cup \bigcup_{2 \leq d \leq n/i}
    L_{di}'
  \]
  The union with $L_i'$ and the large union is a disjoint union, so this implies:
  \[
    L_i'= M_i'\setminus \bigcup_{2 \leq d \leq n/i}
    L_{di}'
  \]
  where the right operand of the set difference is a subset of~$M_i' $.
  Now, by assumption we have a circuit with output gate $g_i$ that computes (the $n$-th slice of) $M_i'$.
  Further, by induction hypothesis, we
  have already computed circuits that compute $L_{di}'$ for
  each $d\geq 2$. We make a $\uplus$-gate $g_i'$ that takes their union,
  which is disjoint by assumption on the
  sets $L_1',\ldots,L_n'$: we also know that all gates in the union have the same
  domain so that smoothness is satisfied. We now use \cref{clm:subsetcompl} to
  do a subset-complement operation, adding constantly many gates to obtaining a
  circuit whose output gate computes $L_i'$. This establishes the induction.

  Finally, the case $i=1$ concludes the proof. An easy accounting of the
  operations performed shows that the running time obeys the bound claimed in
  the statement.
\end{proof}

It is interesting to note that, in this proof, we use negation in a nested
fashion (i.e., not just at top-level, unlike \cref{clm:complement}).
We can then prove \cref{prp:prim}: 

\begin{proof}[Proof of \cref{prp:prim}]
  Applying \cref{lem:ie} (with $L_i' = L_i \cap \Sigma^n$ and $M_i' = M_i \cap \Sigma^n$) to the circuits of~\cref{lem:pasprimi}
gives us, in time $O(n^2 |\Sigma|)$ in the input length $n$, a
$\times,\uplus,\compl$-circuit capturing the $n$-th slice of~$L_\prim$. This 
concludes the proof of \cref{prp:prim}.
\end{proof}

\paragraph{Concatenations of two palindromes}
We now move on to the study of the CFL $\pal^2$, where $\pal$ is the language of
palindromes
on our alphabet $\Sigma$, i.e., 
\[\pal^2 = \{uw\mid  u \text{ and } w \text{ are palindromes}\}.\]
Our main result is:

\begin{theorem}
  \label{mainofpal}
    The problem  $\prob(\pal^2)$ can be solved in data complexity $O(n^3
    |\Sigma|)$ up to the cost of arithmetic operations.
\end{theorem}

It is known that $\pal^2$ is an inherently ambiguous CFL with infinite ambiguity degree~\cite{crestin1972langage}, so the tractability of
$\probone{\pal^2}$ does not follow from \cref{prp:cyk}.
It is incidentally easy to show that $\pal^2$ is not a linear CFL:

\begin{proposition}
  There is no linear CFG recognizing $\pal^2$ on any nontrivial alphabet.
\end{proposition}
\begin{proof}
This result is probably folklore
  but we could not locate a reference, so we prove it here.
  We do the proof on alphabet $\Sigma = \{a,b\}$, the proof obviously
  generalizes to any larger alphabet.

We assume by contradiction that the language is a linear CFL. We use the pumping
lemma for linear CFLs~\cite[Lemma 6]{horvath2010pumping}, which states the
existence of an integer $N \in \NN$ such that any word of the language with
length at least $N$ can be written as $uvwxy$ such that $uv^rwx^ry$ is in the
language for each $r \in \NN$, such that $|vx|>0$, and such that $|uvxy| \leq
N$. Fix $N$ from the lemma statement, and assume without loss of
generality that $N>0$. Consider the word $z = (a^N b a^N) (a^N bb
a^N)$, which is obviously a concatenation of two palindromes. Considering a
factorization of $z = uvwxy$, as $|uvxy| \leq N$, it must be the case that $u =
a^i$, $v = a^j$, $w = a^p b a^N a^N bb a^q$, $x = a^{j'}$, and $y = a^{i'}$.
Now, take $r \coloneq 3N$ and consider
the word
$u v^r w x^r y = a^{i+3Nj+p}b a^{2N} bba^{q+3Nj'+i'}$,
which is supposed to be
a concatenation of two palindromes. It is clear that the left palindrome cannot
include any $b$ of the $bb$ factor at the right,
and that
the right palindrome cannot include the $b$ factor at the left. 
So it must be the case that the left palindrome includes the left $b$,
and the right palindrome includes the right $bb$. 
Now, as $|vx|>0$, we have
$j>0$ (case 1) or $j'>0$ (case 2). In case 1, we have a contradiction because
the left palindrome must include the left $b$ but there are not enough $a$'s to
the right of that~$b$ to match the~$a$'s to the left of the $b$. In case 2, we
have a contradiction by considering the right palindrome and reasoning
symmetrically. This concludes the proof.
\end{proof}

The language $\pal^2$ has been abundantly studied, e.g., in~\cite[Section
3]{guo2015combinatorics}; its words have been called \emph{palindrome
pairs}~\cite{borchert2015words}, or \emph{symmetric
words}~\cite{brlek2004palindromic,fici2024some}.
The efficient recognition of
$\pal^k$ for arbitrary $k$ (e.g., in linear time) was studied
in~\cite{kosolobov2015pal}; see also~\cite{rubinchik2020palindromic}. 
More relevant to our purposes, the problem of counting the number of words of
$\pal^2$ was studied in~\cite{kemp1982number}, but they study unweighted
counting, depending only on the word length and on the alphabet size. 

We now prove \cref{mainofpal} in the remaining of this section.
First we give some definitions. We fix an alphabet $\Sigma$.
We write \(\pale\)
to denote the set of palindromes in \(\Sigma^+\). 
We define the notion of \emph{decomposition into
two palindromes}:

\begin{definition}
  \label{defiaveclememeterme}
  For $v \in \pal^2$, a \emph{decomposition of $v$ into two palindromes} is a
  choice of words $x, y\in \Sigma^*$ such that $v = xy$ and $x \in \pal$ and $y
  \in \pale$. Note that we make the choice to require that $x$ can be empty, but
  $y$ is non-empty. The \emph{number of decompositions} of a word $v \in \pal^2$
  is the number of choices of $x, y$ that are a decomposition of~$v$ into two
  palindromes. We say that $v$ \emph{decomposes uniquely into two palindromes}
  if its number of decompositions is~1.
\end{definition}

Our proof of \cref{mainofpal} is based on a result from~\cite{crestin69these}
and~\cite{crestin1972langage}:

\begin{toappendix}
  \label{app:palindrome}

  In this appendix, we give a self-contained proof of \cref{lem:wordcomb2}. To
  do this, we first prove some technical properties about
  primitive words and then give an alternate characterization of \(\pal^2\).

  We say that two words $u$ and $v$ are \emph{conjugates} if there exist words
  $x$ and $y$ such that we have $u = xy$ and $v = yx$. It is straightforward to
  notice that conjugacy is an equivalence relation: see
  \cite[Theorem~2.4.1]{shallit2008second}. We start with a technical lemma about
  primitive words that is stated in~Section 1.2.1 of \cite{lothaire2002acow}.

  \begin{lemma}
    \label{lem:primite_conjugate_uniqueness}
    Let $w$ be a primitive word. Assume that we can write $w = uv$ and $w =
    u'v'$ such that $v$ and $v'$ are non-empty words and such that $v'u' = vu$.
    Then we have $u = u'$ and $v = v'$.
  \end{lemma}
  \begin{proof}
    We use the fact that, when two words are conjugates, then one is a primitive
    word if and only if the second one also is: see \cite[Theorem
    2.4.2]{shallit2008second}.
    Thus, since \(w = uv\) is primitive, so is its conjugate \(vu\).

    By way of contradiction, let us assume that we can write $w = u'v'$ with 
    \(v'\neq\epsilon\) such that 
    \(v'u'
    = vu\) but without having $u=u'$ and $v=v'$. Given that we have
    $w=uv=u'v'$, this must imply that we have
    both $u \neq u'$ and $v \neq v'$. Up to exchanging $v$ and $v'$, we can assume without loss of
    generality that \(|v'| < |v|\). Then \(w = u'v' = uv\) implies
    that there is a non-empty word \(t\) such that \(v = tv'\) and \(u' = ut\).
    Moreover \(v'u' = vu\) implies that \(v'ut = tv'u\), i.e., $(v'u)t = t
    (v'u)$. We also know that $v'u$ is non-empty because $v'$ was assumed not
    to be empty. We then use the fact that, when we have $xy=yx$ for two
    nonempty words $x$ and $y$, then $x$ and $y$ must be powers of some
    non-empty word, i.e., $x = z^{l_1}$ and $y = z^{l_2}$ for $l_1, l_2 > 0$:
    see \cite[Theorem 2.3.3]{shallit2008second}. Thus, we deduce
    that there is a non-empty word
    \(z\) and integers \(l_1>0\) and \(l_2>0\) such that \(v'u = z^{l_1}\) and \(t
    = z^{l_2}\).
    Therefore, we have \(vu = v'u' = v'ut = z^{l_1+l_2}\). This contradicts
    the fact \(w=vu\) is
    primitive, and concludes the proof.
  \end{proof}

  We then give an alternative characterization of $\pal^2$:

  \begin{claim}[see, e.g., \cite{guo2015combinatorics}, Proposition~5]
    \label{clm:pal2carac}
    A word is in \(\pal^2\) if and only if it is conjugate with its
    mirror image.
  \end{claim}

  The result is shown in \cite{guo2015combinatorics}, but we give the proof
  below to be self-contained:

  \begin{proof}[Proof of \cref{clm:pal2carac}]
    Indeed, if \(u\) is in \(\pal^2\), then \(u = xy\) with
    \(x,\,y\in\pal\), so that \(x^R = x\) and \(y^R = y\). Now \(u^R = y^Rx^R
    = y x\) and thus \(u^R\) is conjugate to \(u\).

    Suppose now that \(u\) and
    \(u^R\) are conjugate, that is \(u = xy\) and \(u^R = yx\). Since \(u^R =
    y^Rx^R\) we must have \(y=y^R\) and \(x=x^R\) and so \(x,\,y \in \pal\), so
    that \(u\) is in \(\pal^2\).
  \end{proof}

  We can now show \cref{lem:wordcomb2}:
\end{toappendix}

\begin{lemmarep}
    \label{lem:wordcomb2}
    Let \(n>0\)
    and \(w\) be a
    primitive word, if
    \(w^n\in\pal^2\), then for all \(m>0\): 
    \begin{itemize}
    \item \(w^m \in \pal^2\), and
    \item The number of decompositions of \(w^m\) into two palindromes (see
      \cref{defiaveclememeterme}) is \(m\).
    \end{itemize}
    In particular, \(w\) has a unique decomposition into two palindromes \(x\)
    and \(y\). Moreover, the pairs of palindromes \(w^m\) decomposes into are exactly the
    pairs of words \(x(yx)^k\) and \(y(xy)^l\) such that \(m = k+l+1\).
  \end{lemmarep}

  As~\cite{crestin69these} and \cite{crestin1972langage} are in French and not
  available online, we provide a self-contained proof of this result in
  Appendix~\ref{app:palindrome}.

\begin{toappendix}

  \begin{proof}
    Let $w$ be a primitive word and $n>0$ be an integer such that
    \(w^n\) is in \(\pal^2\). By \cref{clm:pal2carac},
    this is equivalent to the
    fact that \((w^n)^R = (w^R)^n\) is conjugate to \(w^n\). 
    By \cite[Theorem~2.4.2]{shallit2008second} applied to $w^n$ and $(w^R)^n$, we deduce that
    \(w\) and \(w^R\) are conjugate.
    By \cref{clm:pal2carac} again, this is equivalent to \(w\) being in \(\pal^2\). Notice that, since \(w\) is
    primitive, so is \(w^R\).

    As $w \in \pal^2$, we can write \(w = uv\) and \(w^R = vu\) with \(u,\,v\in\pal\). We assume without
    loss of generality that \(u\) and \(v\) form a decomposition of \(w\), i.e.,
    that \(v\) is not empty. \Cref{lem:primite_conjugate_uniqueness}
    then implies that \(u\), \(v\) is the unique decomposition of \(w\). Indeed,
    assuming by contradiction that we have another decomposition $w = u'v'$ with
    $v'$ non-empty, we would again deduce $w^R = v'u'$ so $vu = v'u'$, and
    \cref{lem:primite_conjugate_uniqueness} allows us to conclude.

    Now, notice that for every \(k\in \mathbb{N}\), \(v(uv)^k\) is a palindrome.
    Indeed, \((v(uv)^{k})^R = (v^R u^R)^{k}v^R = (vu)^{k}v = v(uv)^{k}\).
    Similarly, \(u(vu)^k\) is a palindrome. Thus for \(m>0\), \(w^m = (uv)^m =
    u(vu)^{k}v(uv)^l\) with \(m = k+l+1\). Since both \(u(vu)^{k}\) and
    \(v(uv)^l\) are palindrome and since there are \(m\) pairs \((k,l)\) such
    that \(m = k+l+1\), and because $v$ is non-empty, we have that \(w^m\) has at least \(m\) decompositions.

    To complete the proof, we need to prove that every decomposition of \(w^m\)
    is a pair of words of the form \(u(vu)^{k}\) \(v(uv)^l\) with \(m = k+l+1\).
    Now assume that there is a decomposition of \(w^m\) with a pair of
    palindromes \(w^ku'\) and \(v'w^l\) with \(u'\neq u\), \(v'\neq v\) and
    \(u'v' = w\). Since \(w^ku' = (u'v')^ku'\) is a palindrome, we have that
    \((u'v')^ku' = u'^R(v'^Ru'^R)^k = (u'^Rv'^R)^ku'^R\), this means that \(u'=
    u'^R\) and \(v' = v'^R\) and that \(w = u'v'\) and \(w^R = v'u'\) with
    \(u',\,v'\in \pal\). Since, as we have seen, \(w\) has a unique
    decomposition into two palindromes, the only possibility for such \(u'\) and
    \(v'\) to exist is that they do not form a decomposition of \(w\), i.e. that
    \(v' = \epsilon\) (since we have \(u',\,v'\in \pal\)). In that case,
    \(\epsilon\), \(u'\) is the unique decomposition of \(w\) and thus \(u =
    \epsilon\), \(v = u'=w\). Therefore, the decomposition \(u(vu)^{k}\),
    \(v(uv)^l\) of \(w^m\) is of the form \(w^{k+1}\), \(w^l\) which is (in that particular
    case where \(u=\epsilon\)) one the \(m\) decompositions we exhibited above.
  \end{proof}
\end{toappendix}

Using this lemma, our proof of \cref{mainofpal} deviates
from~\cite{kemp1982number} (which uses generating functions). Namely, we follow
an inclusion-exclusion-based reasoning which is similar to that of primitive
words but more complicated (we partition words based on their order and on also
the offset of their decomposition into two palindromes). 
We exclude the case of words of length 0 as it is trivial.
Thus, let $n \geq 1$ be a word size, let $d$ be a
divisor of~$n$, and let $0 \leq j < n/d$. We define $L_{n,d,j}$ to be the subset of $\pal^2$ consisting of the words $u = v^d$ of length~$n$ and order~$d$ where the left
component of the unique decomposition of $v$ into two palindromes has
length~$j$. In other words, recall that \cref{lem:wordcomb2} implies that $v$
decomposes uniquely into two palindromes \(x\) and \(y\); we require that $|x| =
j$. We call $j$ the \emph{offset}.

We will then prove the following, which implies \cref{mainofpal} via
\cref{prp:circuittract2}:

\begin{claim}
  \label{clm:palcirc}
  Given an integer $n \geq 1$, we can build in time $O(n^3 |\Sigma|)$ a
  $\times,\uplus,\compl$-circuit that captures the $n$-th slice of~$\pal^2$ over
  alphabet $\Sigma$.
\end{claim}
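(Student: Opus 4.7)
My plan is to generalize the strategy of \cref{prp:prim}, partitioning $\pal^2 \cap \Sigma^n$ along the finer structure of the primitive-root order \emph{and} the offset of the unique palindromic decomposition. \cref{lem:wordcomb} makes $\pal^2 \cap \Sigma^n = \bigsqcup_{d \mid n} \bigsqcup_{j=0}^{n/d-1} L_{n,d,j}$ a genuine disjoint partition: every $u \in \pal^2$ has a primitive root $v$ which by the lemma lies in $\pal^2$ with a unique decomposition, so $u$ is classified by its order $d$ and the offset $j$ of $v$; conversely, if $v = x_0 y_0$ with $x_0, y_0 \in \pal$, then $v^d = x_0 (y_0 x_0)^{d-1} y_0$ is again in $\pal^2$, showing every $L_{n,d,j}$ lies in $\pal^2$.

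For each such $(d, j)$ I introduce the ``relaxed'' family $M_{n,d,j}$ consisting of words $u \in \Sigma^n$ of the form $u = v^d$ for some (not necessarily primitive) $v \in \Sigma^{n/d}$ admitting a palindromic decomposition with left part of length $j$. Membership in $M_{n,d,j}$ is a conjunction of position-equality constraints: the equalities $u[i] = u[i + n/d]$ from $u = v^d$, and the mirror equalities within the two palindrome factors of $v$, lifted to $u$ through the first family. Taking their transitive closure partitions $\{1, \ldots, n\}$ into equivalence classes, and a word lies in $M_{n,d,j}$ iff it is constant on every class. I capture $M_{n,d,j}$ by a $\times,\uplus$-circuit of size $O(n|\Sigma|)$ in the style of \cref{lem:pasprimi}: a decomposable $\times$-gate over classes, each input being a smooth deterministic $\uplus$-gate over the $|\Sigma|$ choices of letter for that class, each branch being a $\times$-gate of the corresponding input gates $i{:}a$.

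The heart of the argument is the identity
\[
  M_{n, d, j} \;=\; \bigsqcup_{e \,:\, d \mid e,\; e \mid n} L_{n, e,\, j \bmod (n/e)},
\]
whose right-hand side is a disjoint union because the $L_{n,e,j'}$ partition $\pal^2 \cap \Sigma^n$. The main obstacle is to prove the following \emph{inheritance} property: for $w$ primitive in $\pal^2$ with unique offset $j'$ and $m \geq 1$, the palindromic decompositions of $v = w^m$ are exactly those at offsets $j' + k|w|$ for $0 \leq k < m$. The inclusion $\supseteq$ is a direct computation using $(x_0 y_0)^k x_0 = x_0 (y_0 x_0)^k$, which makes both halves visibly palindromic. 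For the converse, if $v = xy$ with $x, y$ palindromes and $|x| = \ell$, then $v^R = yx$ is $v$ rotated cyclically by $\ell$; combining with $v^R = (w^R)^m$ and the fact that the rotation of $w^m$ by $\ell$ equals $(w^{(\ell \bmod |w|)})^m$ (where $w^{(r)}$ denotes the cyclic rotation of $w$ by $r$), and taking $m$-th roots (unique among words of length $|w|$), one deduces $w^R = w^{(\ell \bmod |w|)}$; primitivity of $w$ forces this equation to have a unique solution modulo $|w|$, and since $w^R = y_0 x_0 = w^{(j')}$, that solution must be $j'$, giving $\ell \equiv j' \pmod{|w|}$.

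With the identity in hand, I build the $L$-circuits by downward induction on $d$: the base case $d = n$ has $L_{n, n, 0} = M_{n, n, 0}$ directly, and for smaller $d$ I use
\[
  L_{n, d, j} \;=\; M_{n, d, j} \;\setminus\; \bigsqcup_{e \,\mid\, n,\; d \mid e,\; d < e} L_{n, e,\, j \bmod (n/e)},
\]
assembling the right-hand side as a disjoint $\uplus$ of already-built $L$-circuits and then applying the subset-complement construction of \cref{clm:subsetcompl}. The final circuit for the $n$-th slice of $\pal^2$ is the disjoint $\uplus$ of all the $L$-circuits over $(d, j)$. Since the number of $(d, j)$ pairs is $\sum_{d \mid n} n/d = \sigma_1(n) \leq n^2$, each $M$-circuit has size $O(n|\Sigma|)$, and the combinational overhead per induction step is negligible, the total running time is $O(n^3 |\Sigma|)$, matching the claimed bound.
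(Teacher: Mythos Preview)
Your proof is correct and follows essentially the same route as the paper: the same partition $\pal^2 \cap \Sigma^n = \biguplus_{d\mid n}\biguplus_{0\le j<n/d} L_{n,d,j}$, the same relaxed languages $M_{n,d,j}$ with $O(n|\Sigma|)$ $\times,\uplus$-circuits, the same identity $M_{n,d,j}=\biguplus_{p\ge 1,\,pd\mid n} L_{n,pd,\,j\bmod(n/pd)}$, and the same downward induction via subset-complement. The only difference is that you spell out the ``inheritance'' direction of the identity (palindromic decompositions of $w^m$ for primitive $w$) via the rotation argument, whereas the paper's proof of that claim is terser on this point.
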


We prove \cref{clm:palcirc} in the rest of the section. As words of length~$n$ can be partitioned by their order (which is a divisor
of~$n$), and as the words of $\pal^2$ of a given length and order can be
partitioned uniquely (by \cref{lem:wordcomb2}) by the offset, we immediately
have the following for all $n \in \NN$:

\begin{equation}
  \label{eqn:decomp}
  \pal^2 \cap \Sigma^n = \biguplus_{d|n} \biguplus_{0 \leq j < n/d} L_{n,d,j}
\end{equation}

Thanks to this equation, the problem reduces to the computation of tractable
circuits for the $L_{n,d,j}$. To achieve this, similarly to the case of
primitive words, let us instead define
$M_{n,d,j}$ for $n \geq 1$ and $d$ a divisor of~$n$ and $0 \leq j < n/d$ in
the following way: $M_{n,d,j}$ contains precisely those words $u$ of length~$n$ which can be written
as $u = v^d$ with $v$ not necessarily primitive (i.e., the order of~$u$ is a
multiple of~$d$), and we have $v \in \pal^2$ with offset $j$, i.e., we can write $v =
xy$ with $|x| = j$ such that $x \in \pal$ and $y \in \pale$.
Observe that words in $M_{n,d,j}$ are in $\pal^2$, as they are of the form $(xy)^d = (x) (y(xy)^{d-1})$ with $x$ and $y$ in $\pal$.

We claim that we can efficiently construct circuits for the $M_{n,d,j}$,
analogously to \cref{lem:pasprimi}:

\begin{claim}
  \label{clm:base}
  Given integers $n \geq 1$, $d$ a divisor of~$n$, and $0 \leq j < n/d$, we
  can compute in time $O(n \times |\Sigma|)$ a $\times,\uplus$-circuit that
  captures the $n$-th slice of $M_{n,d,j}$.
\end{claim}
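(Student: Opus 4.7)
My plan is to reduce membership in the $n$-th slice of $M_{n,d,j}$ to a conjunction of pointwise equalities between positions of the input word, and then build a $\times,\uplus$-circuit that directly encodes these equalities, very much in the spirit of \cref{lem:pasprimi}.

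Set $m \coloneq n/d$. By definition, a word $u \in \Sigma^n$ lies in $M_{n,d,j}$ iff $u = v^d$ for some $v \in \Sigma^m$, and $v$ admits a (not necessarily unique here) decomposition $v = xy$ with $|x| = j$, $x \in \pal$, and $y \in \pale$. Unpacking these conditions gives three families of equalities on positions of $u$: (a) periodicity, $u_k = u_{k+pm}$ for $1 \leq k \leq m$ and $0 \leq p < d$; (b) $x \in \pal$, i.e.\ $u_i = u_{j+1-i}$ for $1 \leq i \leq j$; and (c) $y \in \pale$, i.e.\ $u_{j+i} = u_{m+1-i}$ for $1 \leq i \leq m-j$ (the nonemptiness of $y$ is automatic from the hypothesis $j < n/d$). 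These families jointly generate an equivalence relation $\sim$ on $\{1, \ldots, n\}$. I compute its classes $E_1, \ldots, E_r$ in time $O(n)$ using a union-find structure, and by construction $u \in M_{n,d,j}$ iff $u$ is constant on each class.

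Given the classes, the circuit is built as follows. The output is a $\times$-gate with inputs $g_1, \ldots, g_r$, where $\dom(g_s) = E_s$; decomposability holds because the $E_s$ partition $\{1, \ldots, n\}$. Each $g_s$ is a $\uplus$-gate with inputs $g_{s,a}$ for $a \in \Sigma$, and each $g_{s,a}$ is a $\times$-gate of the input gates $i{:}a$ for $i \in E_s$. Each $g_{s,a}$ is decomposable (singleton domains $\{i\}$ are disjoint), smoothness of $g_s$ holds since $\dom(g_{s,a}) = E_s$ for every~$a$, and $g_s$ is disjoint because $\SS(g_{s,a})$ is the singleton containing the constant-$a$ function on $E_s$, so different letters give disjoint sets. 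By induction on the circuit, $\SS$ of the output gate is exactly the set of functions $\{1, \ldots, n\} \to \Sigma$ that are constant on each $E_s$, which is $M_{n,d,j} \cap \Sigma^n$.

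The running time is dominated by instantiating one $\times$-gate per pair $(s, a)$ plus one input gate per pair $(i, a)$ with $i \in E_s$, giving $O(n |\Sigma|)$ gates and wires. The only real subtlety is checking that the equalities (a)--(c) really characterize $M_{n,d,j}$: I expect this to be transparent because the three families impose only pointwise equalities and their transitive closure is exactly $\sim$, so there is no residual palindromic or periodic constraint left unaccounted for. Note in particular that the circuit does not need to enforce that the period $v$ is primitive or that the decomposition $v = xy$ is unique: the definition of $M_{n,d,j}$ only asks that \emph{some} such decomposition exists, which is precisely what the equalities capture.
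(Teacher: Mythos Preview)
Your proposal is correct and follows essentially the same approach as the paper: both observe that membership in $M_{n,d,j}$ is characterized purely by pointwise equality constraints among positions (periodicity plus the two palindrome conditions), partition the positions into the resulting equivalence classes, and build the circuit as a decomposable product over classes, each class being a deterministic smooth disjunction over letters of a product of input gates. The only cosmetic difference is that the paper enumerates the classes explicitly (indexed by the left halves of $x$ and~$y$, yielding $\lceil j/2\rceil + \lceil (m-j)/2\rceil$ representatives), whereas you compute them via union-find; note that union-find is technically $O(n\,\alpha(n))$ rather than $O(n)$, but here the class structure is explicit enough that a direct $O(n)$ computation would also work if you were pressed on the bound.
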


\begin{proof}
  The intuition is that a word $u = v^d$ of $M_{n,d,j}$ is precisely defined by
  the left half of the first palindrome $x$ and the left half of the right
  palindrome $y$ in the decomposition $v = xy$. So, consider each index of the
  left half of the first palindrome (there are $l \coloneq \lceil j/2 \rceil$), and each
  index of the left half of the second palindrome (there are $r 
  \coloneq \lceil (n/d-j)/2
  \rceil$). We do a product across these $l+r$ indexes: for each index $\iota$,
  the gate $g_\iota$ will have as its domain the position that are required to
  be equated to the position of the index, and this partitions the positions of
  the word in a decomposable fashion. Now, each $g_\iota$ is a disjunction of
  gates $g_{\iota,a}$, over
  all $a \in \Sigma$, of the letter that we choose to put at this position: this
  is a deterministic disjunction, and it will be smooth by the definition of its
  input gates $g_{\iota,a}$.
  The inputs $g_{\iota,a}$ are decomposable
  products of input gates $\kappa:a$ for each position $\kappa$ associated to
  the index $\iota$: there are $2n/d$ inputs for most indexes $\iota$, except
  possibly $n/d$ if $\iota$ is the index corresponding to the middle of $x$ or
  of $y$ if they have odd length.
  The overall circuit size, and the running time of the construction, is $O(n
  \times |\Sigma|)$.
\end{proof}

We now do the analogue of \cref{lem:ie}, using the following central
relationship between the $M_\bullet$ and $L_\bullet$:

\begin{claim}
  \label{clm:lm}
  For each integer $n \geq 1$, divisor $d$ of~$n$, and offset $0 \leq j < n/d$,
  we have the following, with disjoint union:
  \[
    M_{n,d,j} = \biguplus_{p \geq 1, pd | n} L_{n,pd,j \bmod (n/pd)}
  \]
\end{claim}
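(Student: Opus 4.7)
The plan is to establish the set equality by verifying both inclusions and then observing disjointness. Disjointness is immediate because the order of any word $u \in \Sigma^n$ is a well-defined integer, so $u$ lies in $L_{n,p'd,\cdot}$ for at most one value of~$p'$; the claimed $\biguplus$ is therefore legitimate.

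For the forward inclusion $\subseteq$, I would take $u \in M_{n,d,j}$, so $u = v^d$ with $v = xy$, $x \in \pal$, $y \in \pale$, $|x| = j$. Let $w$ be the primitive root of $u$, write $u = w^{pd}$, so $v = w^p$, and set $\ell = |w| = n/(pd)$. Decompose $j = q\ell + r$ with $0 \leq r < \ell$; the bounds on~$j$ force $0 \leq q \leq p-1$. Letting $w'$ be the length-$r$ prefix of~$w$ and $w''$ its length-$(\ell-r)$ suffix (so $w = w'w''$), one sees $x = w^q w'$ and $y = w'' w^{p-q-1}$. The crux is to deduce that $w'$ and $w''$ are palindromes from the fact that $x$ and $y$ are. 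For $w'$, the first $r$ letters of $x = w^q w'$ are $w'$ (reading from the left), while the first $r$ letters of $x^R = (w')^R (w^R)^q$ are $(w')^R$; since $x = x^R$, this forces $w' = (w')^R$. A symmetric argument comparing the first $\ell - r$ letters of $y$ and of $y^R = (w^R)^{p-q-1}(w'')^R$ gives $w'' \in \pal$, and $w'' \neq \epsilon$ because $r < \ell$. By \cref{lem:wordcomb}, $w = w'w''$ is then \emph{the} decomposition of~$w$ into two palindromes, whose offset is $r = j \bmod \ell$, so $u \in L_{n,pd,r}$.

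For the converse inclusion $\supseteq$, take $u \in L_{n,pd,r}$ with $r = j \bmod \ell$. Then $u = w^{pd}$ with $w = x''y''$ primitive, $x'' \in \pal$, $y'' \in \pale$, $|x''| = r$. Setting $q = \lfloor j/\ell \rfloor$, $v = w^p$, $x = w^q x''$, and $y = y'' w^{p-q-1}$, one checks immediately that $xy = v$ and $|x| = j$. That $x$ and $y$ are palindromes follows from $w^R = y''x''$ (since $x'', y'' \in \pal$) combined with the elementary identity $(x''y'')^q x'' = x''(y''x'')^q$, and $y$ is nonempty because $y''$ is. Hence $v \in \pal^2$ with offset $j$, and $u \in M_{n,d,j}$.

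The main obstacle is the palindrome transfer across the primitive root in the $\subseteq$ direction: one must upgrade the hypothesis that $v = w^p = xy$ splits into palindromes to a decomposition of $w$ itself into palindromes. All the rest is bookkeeping on lengths and divisibility. \cref{lem:wordcomb} plays a double role: it guarantees that the decomposition of $w$ extracted above is \emph{the} one defining the offset of the $L_{n,pd,\cdot}$ that $u$ belongs to, so the image $j \bmod (n/(pd))$ in the union is correctly identified.
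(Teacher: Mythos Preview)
Your proof is correct and follows essentially the same approach as the paper: partition $M_{n,d,j}$ according to the order $pd$ of its elements, and show that a decomposition of $v=w^p$ into two palindromes at offset $j$ corresponds exactly to a decomposition of the primitive root $w$ at offset $j \bmod |w|$. The paper states this transfer in one line (``as $z^p$ is periodic, we see that $z$ can be decomposed $\ldots$ iff $z^p$ can be decomposed $\ldots$''), whereas you spell out the prefix/suffix comparison that makes it work; this is additional detail rather than a different argument.
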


\begin{proof}
  We consider the definition of the $M_{n,d,j}$ and partition the words
  according to their order. This order $pd$ must be a multiple of~$d$, and a
  divisor of~$n$. Now, we claim that the words $u = v^d = z^{pd}$ of a given
  order $pd$ in $M_{n,d,j}$ are precisely the words of $L_{n,pd,j \bmod
    (n/pd)}$.

  Take \(z\) a primitive word so that \(u = z^{pd}\) and \(u\in M_{n,d,j}\).
  Then \(u\) decomposes into the two palindromes \(u_1\) and \(u_2\) so that
  \(|u_1| = j\). From \cref{lem:wordcomb2}, we obtain that \(z\) decomposes
  uniquely into two palindromes \(x\) and \(y\), and  there is \(k\) and \(l\) so
  that \(u_1 = x(yx)^k\), \(u_2 = y(xy)^l\) and \(pd=k+l+1\). Now \(|z| = |xy|
  \frac{n}{pd}\), \(|u_1| = |x|+\frac{kn}{pd}\), thus \(|x| = j\bmod (n/pd)\).
  If follows that \(u\in L_{n,pd,j\bmod (n/pd)}\).

  Conversely take \(u \in L_{n,pd,j\bmod (n/pd)}\) and a primitive word \(z\) so that
  \(u = z^{pd}\). \cref{lem:wordcomb2} implies that \(z\) decomposes uniquely
  into the palindromes \(x\) and \(y\) so that \(|x| = j\bmod (n/pd)\). We also
  have that \(|z| = \frac{n}{pd}\). We let \(k\) be so that \(j = |x| +
  \frac{kn}{pd}\). \cref{lem:wordcomb2} implies that \(u\) decomposes into \(u_1
  = x(yx)^k\) and \(u_2 = y(xy)^l\) with \(pd=k+l+1\). Thus \(u\) is in
  \(M_{n,d,j}\).
\end{proof}

We can now conclude by computing a $\times,\uplus,\compl$-circuit for each
$L_{n,d,j}$ with $d$ a divisor of~$n$ and $0 \leq j < n/d$.
Similarly to the proof of \cref{lem:ie}, we do so by induction over
decreasing values of~$d$, dealing for each $d$ with all values of~$j$.
The base case of $L_{n,n,0}$ is immediate as it is
obviously the deterministic union over $a \in \Sigma$ of the decomposable
product over $1 \leq i \leq n$ of the input gates $n:a$ (i.e., palindromes of
the form $a^n$, which are all in $\pal^2$ and are precisely the words of $\pal^2
\cap \Sigma^n$ with order $n$).

For the induction case with a divisor $d<n$ of~$n$ and $0\leq j < n/d$,
we can rewrite the equation of \cref{clm:lm} as follows by
distinguishing the case $p=1$, with
the union being disjoint and the right argument of the relative complement being
included in the left complement:

\[
    L_{n,d,j} = M_{n,d,j} \setminus \biguplus_{p > 1, pd | n} L_{n,pd,j \bmod (n/pd)} 
\]
From the $\times,\uplus$-circuit of \cref{clm:base} for $M_{n,d,j}$, and the previously
computed circuits,
using \cref{clm:subsetcompl} like in \cref{lem:ie}, we obtain a
$\times,\uplus,\compl$-circuit for $L_{n,d,j}$. 
We conclude with \cref{eqn:decomp}.
The complexity of the construction consists of up to $n^2$ invocations of
\cref{clm:base}, yielding $O(n^3 |\Sigma|)$, and up to $n^2$ gates for the
$L_{n,d,j}$ that all may have linearly many inputs, yielding a total of  $O(n^3
|\Sigma|)$.
Thus, we have shown \cref{clm:palcirc} and
hence \cref{mainofpal}.

\section{Combined Complexity}
\label{sec:morecombined}
In this final section of the paper, we change our perspective on the
probabilistic membership problem and study it from the angle of \emph{combined
complexity}, i.e., when the input features both a probabilistic word and a
representation of the target language. Unlike data complexity, the problem then
depends on how the target language is represented. 
In this section, we focus on regular languages: we give basic definitions
of \emph{automata} which we use as representations of such languages, and we
recall why the probabilistic membership problem in regular languages is
intractable in combined complexity when the language is given as a nondeterministic
automaton. We give a self-contained proof of this result, which illustrates in
particular that intractability already holds for languages that amount to a
\emph{partial pattern matching} problem for a union of partial patterns.
We then show that, in fact, intractability even holds for the more restricted
setting where the language asks for partial pattern matching to one single
partial pattern. Last, we turn to tractability results and show that we can
recover tractability in the combined complexity setting if we require the input
automaton to be \emph{$k$-ambiguous} for any fixed constant $k>0$.

\paragraph*{Automata definitions and basic results.}
We focus in this section on
representations of languages as \emph{automata}, specifically
\emph{nondeterministic automata} (NFAs): such an automaton is a tuple
$A = (Q, q_0, F, \delta)$ where $Q$ is a finite set of \emph{states},
$q_0 \in Q$ is the \emph{initial state},
$F \subseteq Q$ are 
the \emph{final states}, and $\delta \subseteq Q \times \Sigma \times Q$
is a \emph{transition relation}. 
If $(q,a,q') \in \delta$, we say that there is an \emph{$a$-transition} from~$q$
to~$q'$.
Given a word $w = a_1 \cdots a_n$, a \emph{run} of~$A$ on~$w$ is a sequence of
states $q_0, \ldots, q_n$ where $q_0$ is the initial state and $(q_{i-1}, a,
q_i) \in \delta$ for each $0 < i \leq n$; the run is \emph{accepting} if $q_n
\in F$. The \emph{language accepted by $A$}, denoted by $\LL(A)$, is the set of
words $w \in \Sigma^*$ such that $A$ has an accepting run on~$w$.
The languages that can be recognized by NFAs are the regular
languages, which are a strict subset of the (linear) context-free languages; and every NFA can
be converted in PTIME to a linear CFG that recognizes the same language.

In data complexity, we already know that the probabilistic membership problem in
a regular language accepted by an NFA $A$ can always be solved in polynomial time: we simply construct
an automaton $A'$ that recognizes the same language as $A$ and which is
\emph{unambiguous} (UFA), i.e., it has at most one accepting run on every word. This can
be performed in particular with the standard determinization construction. We can then conclude by \cref{prp:cyk}, as UFAs can be efficiently
converted to linear uCFGs (see also \cite{atteson1998calculating}). In combined complexity, however, the same is not true.
While \cref{prp:cyk} gives us a combined tractability result when the input is a
UFA, tractability does not hold when the input is an NFA. In fact, the problem
is known to be \#P-complete~\cite{alvarez1993very}, even though it admits an
FPRAS~\cite{arenas2021nfa} (which was later generalized to
CFLs~\cite{meel2026cfg}). The \#P-hardness already holds in the case of
uniform probabilistic words, i.e., for the unweighted
counting problem of~\cite{bertoni1991complexity}:

\begin{proposition}[\cite{alvarez1993very}, Folklore]
  \label{prp:nfacombinedhard}
  The following problem is \#P-hard: given as input an NFA $A$ over alphabet $\Sigma=\{0,1\}$ and a length $n\in \NN$, compute
$|\LL(A) \cap \Sigma^n|$.
\end{proposition}

We will now show that \#P-hardness already holds in very restricted cases of
NFAs, subsuming the above result.

\paragraph*{Partial pattern matching.}
We will study the case of NFAs corresponding to a \emph{partial pattern
matching} task. Recall that the \emph{pattern matching problem} asks
us, given a word $u \in \Sigma^*$ and a pattern $v \in \Sigma^*$, to
decide whether $v$ occurs as a \emph{factor} (or contiguous subword) of~$u$,
i.e., whether there exist $s, t \in \Sigma^*$ such that $u = s v t$. In
other words, pattern matching asks us to decide whether $u$ belongs to the
regular language $\Sigma^* v \Sigma^*$. The \emph{probabilistic pattern
matching} problem asks us to compute, given a probabilistic word $p$ and a word
$v \in \Sigma^*$, the answer to probabilistic matching to the language $L =
\Sigma^* v \Sigma^*$, i.e., the probability $p(\Sigma^* v \Sigma^*)$ of the
outcomes of~$p$ that contain $v$ as a factor. It is not difficult to see that
this problem can be solved in polynomial time even in combined complexity:

\begin{proposition}
The probabilistic pattern matching problem can be solved in polynomial time in combined complexity.
\end{proposition}
\begin{proof}
We use the Knuth-Morris-Pratt algorithm
to compute tractably from the pattern~$v$ an automaton $A_v$ which accepts $\Sigma^* v \Sigma^*$
and is deterministic. This automaton is in particular unambiguous, so we can apply \cref{prp:cyk}.
\end{proof}

Let us now study a
generalization of pattern matching, called \emph{partial pattern matching},
where the pattern $v$ is a partial word:

\begin{definition}
  A \emph{partial pattern} over $\Sigma$ is simply a partial word over $\Sigma$. Let
  $v$ be a partial pattern over $\Sigma$, and let $u \in \Sigma^*$ be a word.
  We
  say that $u$ \emph{matches} $v$ if $u$ contains a factor that is a completion of $v$.
  In the \emph{partial pattern matching problem}, we are given a word $u \in
  \Sigma^*$ and a partial pattern $v$, and we must determine whether $u$
  has a factor which is a completion of~$v$.
  Likewise, in the \emph{completion counting problem for partial patterns} (the analogue of the completion counting problem, see
  \cref{def:dcccp}), we are given a partial word $u$ over~$\Sigma$ and a partial pattern
  $v$, and we must count how many completions of~$u$ contain a factor which is a
  completion of~$v$.
  Last, in the
  \emph{probabilistic partial pattern matching problem}, we are given a
  probabilistic word $p$ over~$\Sigma$ and a partial pattern $v$, and we must
  determine the total probability of the outcomes of~$u$
  having a factor which is a completion of~$v$, i.e., the probabilistic
  membership problem to the language $\Sigma^* L_v \Sigma^*$ where $L_v$ is the
  finite language of the words that are completions of~$v$.

  We further
  generalize the partial pattern matching problems defined above to admit
  multiple partial words $v_1, \ldots, v_m$ as input.
  For instance, for the probabilistic partial pattern matching problem, we must compute the total probability of the 
  outcomes of $u$ that match one of $v_1, \ldots, v_m$, i.e., the probability of those outcomes of $u$ 
  that contain a factor that is a completion of one of $v_1,
  \ldots, v_m$.
\end{definition}

The partial pattern matching problem for patterns $v_1, \ldots, v_m$ can be
solved tractably in combined complexity, because it amounts to
testing membership to the language $L_{v_1, \ldots, v_m}$ of the words containing a factor that is a
completion of one of $v_1, \ldots, v_m$; this language is regular, and one can
easily compute in linear time from the partial words $v_1, \ldots, v_m$ an NFA
that accepts $L_{v_1, \ldots, v_m}$. However, the same is not obvious for the
probabilistic partial pattern matching problem (which amounts to
probabilistic membership to the regular language $L_{v_1, \ldots, v_m}$),
or even for the completion counting problem for partial patterns.

It is however rather easy to show that the completion counting problem for
partial pattern matching is \#P-hard in
combined complexity when we allow an unbounded number of partial patterns as
input. This already gives a self-contained proof of \cref{prp:nfacombinedhard};
we give it here for completeness, but it can also be found, e.g., in
\cite[Section 2.2]{irwin2022complexity}:

\begin{proposition}
  \label{prp:multpattern}
  The completion counting problem for partial patterns is
  \#P-hard, even on the alphabet $\Sigma \coloneq \{0, 1\}$. Hence, the same
  holds for probabilistic partial pattern matching. 
\end{proposition}

\begin{proof}
  We reduce from the problem \#Positive-DNF,
  which is \#P-hard: given as input a
  positive DNF
  formula $\Phi = \bigvee_{i=1}^m T_i$ over variables $X=\{x_1,\ldots,x_n\}$
  (where each term $T_i$ is a conjunction of positive literals over $X$), compute $\#\Phi$,
  the number of 
  assignments $\tau\coloneq X \to \{0,1\}$ that satisfy $\Phi$. 
  The reduction works as follows.
  For a word $w\in \Sigma^n$, denote by $\tau_w$ the corresponding assignment,
  i.e., such that $\tau(x_i) = w_i$ for $1\leq i \leq n$.
  Now, for each term $T_j$, we build a partial pattern $v_j$ of length $n$
  whose completions are precisely the words $w \in \Sigma^n$ such that $\tau_w$
  satisfies term $T_j$. This can be accomplished simply by putting at the $k$-th
  position of $v_j$ for $1 \leq k \leq n$ the letter $1$ if the variable $x_k$
  occurs in the term $T_j$ (necessarily positively), and the wildcard $?$ if it
  does not occur.
  It is clear that the partial patterns $v_1, \ldots, v_m$ are each constructed in
  linear time; and it is then obvious that indeed a word $w \in \Sigma^n$ has a
  factor that is a completion of one of the $v_j$'s if and only if it is itself a
  completion of one of the $v_j$'s, that is, if and only if the corresponding
  valuation satisfies the term $T_j$. %
  Thus, we have shown that computing the
  number of satisfying valuations of~$\Phi$ reduces in PTIME to counting the
  number of completions of the partial word $(?)^n$ that contain a factor which
  is a completion of one of the partial patterns $v_1, \ldots, v_m$, which
  concludes 
  our reduction.
\end{proof}

It is then natural to wonder whether the hardness result shown above still
applies even in the restricted setting where we only allow one single partial pattern as
input; note that the proof of \cref{prp:multpattern} given above crucially uses
an unbounded number of patterns. We can show that this is true with a more
elaborate reduction (which first appeared as~\cite{patterns}):

\begin{proposition}
  \label{prp:pattern}
  The completion counting problem for partial patterns over alphabet $\Sigma \coloneq \{0, 1\}$ is
  \#P-hard even when we are given just one partial pattern as input. 
  That is, the
  following problem is \#P-hard: given two partial words $u$ and $v$ over
  $\Sigma$, count the number of completions of $u$ that contain a factor which
  is a completion of~$v$.
  Hence, the probabilistic partial pattern matching problem is also \#P-hard
  over alphabet $\Sigma = \{0,1\}$ even when we are given just one partial
  pattern as input.
\end{proposition}

Similar results to \cref{prp:pattern} appear in the bioinformatics literature.
For instance \cite[Theorem 3.1]{ma2007complexity} shows the
NP-hardness of the completion counting problem\footnote{In their terminology the partial pattern $v$
is a \emph{spaced seed}, the partial word $u$ is a \emph{region},
and the probability of matching is the \emph{sensitivity}.}, over alphabet $\{0,1\}$, and
even applies with the additional restrictions that 
the input partial word is $(?)^n$, and that the partial pattern $v$ is of the
form $\{1,?\}^*$. We remark that these two
restrictions were obeyed in the proof of \cref{prp:multpattern} above (for a
unbounded union of patterns), but they will
not be obeyed in our proof of \cref{prp:pattern} below: 
the pattern $v$ will use the letters $0$ and $1$ in addition to the wildcard
$?$,
and
the partial word $u$ will use the letters $0$ and $1$ and the wildcard $?$.
Yet, our \cref{prp:pattern} is not subsumed by 
\cite[Theorem 3.1]{ma2007complexity}, because 
they
show NP-hardness and not \#P-hardness. We do not know if \#P-hardness can
be shown under their assumptions, e.g., by an adaptation of their proof.

We also note that other variants of the probabilistic partial pattern matching
problem have been considered for bioinformatics, studying for instance the
average number of match occurrences, or the distribution of
matches~\cite{kleffe1990exact,nicodeme1999motif, lladser2008multiple}, or the
patterns $v$ maximizing the match
probability~\cite{ma2009seed,nicolas2008hardness}. Further, counting problems for
partial words have also been studied more generally in computer
science~\cite{blanchet2007algorithmic,manea2010hard,manea2013hardness}. In
particular, \cite[Theorem~11]{manea2013hardness} shows that it is \#P-complete
to count the words of a given length which are the completions of at least one
factor of a given partial word~$v$; note that this is different from
\cref{prp:pattern} because the factor relationship is reversed, i.e., our problem
makes sense when $v$ is shorter than~$u$, unlike theirs.

Let us now prove \cref{prp:pattern}:

\begin{proof}[Proof of \cref{prp:pattern}]
  We reduce from the \#Positive-DNF problem.
  Let $\Phi= \bigvee_{i=1}^m
  T_i$ be such a DNF, over variables $X=\{x_1,\ldots, x_n\}$. We explain how to
  encode $\Phi$ in PTIME to a partial word $u$ and partial pattern $v$ such
  that $u$ contains precisely $n$ "?"s corresponding to variables and, for each
  completion of these $n$ "$?$"s, the resulting word $u'$ has a factor matching
  $v$ precisely when the corresponding Boolean valuation satisfies $\Phi$. 

To simplify, we will describe the words $u$ and $v$ on the expanded alphabet
$\{\$, 0, 1\}$, also allowing wildcards "$?$" which can be replaced by $0$ or
$1$ only; note that this is the analogue of the domain-constrained completion
counting problem from \cref{def:dcccp}. We will explain at the end how to go back to the alphabet $\{0, 1\}$.

The partial pattern $v$ is defined as $v = \$ v_1 \$ v_2 \cdots \$ v_m \$$,
where $v_j$ for $1 \leq j \leq m$ is a word of length $n$ describing the $j$-th
term: for $1 \leq i \leq n$ we have $v_j[i] = 1$ if variable $x_i$ occurs in
term $j$ (positively, because the DNF is monotone) and $?$ if it does not
occur.

A key property is that, for any Boolean valuation $\nu$, writing it as a word
$w_\nu$ of length $n$ over $\{0,1\}$, then $w_\nu$ matches $v_j$ iff $\nu$
satisfies the $j$-th term.

The partial word $u$ is defined as: $u = (\$ 1^n)^{m-1} \$ (?)^n \$ (1^n
\$)^{m-1}$. Intuitively, the middle $(?)^n$ block will correspond to the
valuations of the variables. We claim that, for each valuation $\nu$ of the
variables of $\Phi$, considering the completion $u'$ of $u$ obtained by
substituting the middle $(?)^n$ by the word $w_\nu$ on $\{0,1\}$ that codes
$\nu$, then $u'$ has a factor matching $v$ precisely when $\nu$ satisfies
$\Phi$.

For the easy forward direction, if $\nu$ satisfies $\Phi$, then it satisfies a
term of $\Phi$, say the $j$-th one with $1 \leq j \leq m$. We then know that
$w_\nu$ matches $v_j$, and we extend this to a match of $v$ by matching the
$\$$'s of $v$ to the $\$$'s "around" the middle part of $u$. Note that the other $v_{j'}$ with $j' \neq j$ are matched too, because they contain only "$?$"s and $1$'s and are mapped to a $1^n$ factor of $u$.

For the converse direction, consider the possible matches of $v$ in a
completion of $u$. In $v$ we have $m+1$ occurrences of $\$$ each separated by
$n$ letters, and in $u$ we have $2m$ occurrences of $\$$ each separated by $n$
other letters. So, looking at the occurrences of $x$, the candidate matches
are:
\begin{itemize}
  \item $v_1$ is matched to the first $1^n$ block, ..., $v_m$ is matched to the
    completion of the $(?)^n$ central block
  \item $v_1$ is matched to the second $1^n$ block, ..., $v_{m-1}$ is matched to
    the completion of the $(?)^n$ central block, $v_m$ is matched to the first
    $1^n$ block after the $(?)^n$
  \item \ldots
  \item $v_1$ is matched to the completion of the $(?)^n$ block, $v_2$ is
    matched to the first $1^n$ block after the $(?)^n$, ..., $v_m$ is matched to
    the $(m-1)$-th (and last) $1^n$ block after the $(?)^n$.
\end{itemize}

For each $1 \leq j \leq m$, the $j$-th candidate match above is indeed a match
precisely when the completion of $(?)^n$ matches $v_j$, i.e., corresponds to a
Boolean valuation that satisfies the $j$-th term. So, if a completion of $u$
has a factor matching $v$, then it corresponds to a Boolean valuation that
satisfies some term, i.e., that satisfies $\Phi$.

Last, to go back to $\Sigma = \{0, 1\}$, it is easy to get rid of the $\$$'s, e.g., by the following substitutions:
\begin{itemize}
  \item  $\$$ is replaced by $111$
  \item  $?$ is replaced by $0?0$ (both in $u$ and $v$)
  \item  $1$ is replaced by $010$ (both in $u$ and $v$)
\end{itemize}
These substitutions ensure that, in completions of $u$ and $v$, the occurrences of three consecutive $1$'s precisely correspond to the occurrences of the $\$$'s, so the previous proof adapts.
\end{proof}

\paragraph{$\bm{k}$-ambiguous NFAs.} We have seen that the probabilistic partial
pattern matching problem is intractable in combined complexity, i.e., given a
probabilistic word $p$ and a partial word~$v$ we cannot tractably compute the
probability of the outcomes of $p$ containing a factor which is a completion
of~$v$. For the probabilistic membership problem, this implies that it is
intractable in combined complexity to compute the probability $p(\Sigma^* L_v
\Sigma^*)$ for $L_v$ the language of words that are completions of~$v$, when
given $p$ and $v$ as input. Thus, probabilistic membership is intractable in
combined complexity even when the input language is given as an NFA which is
required to be of the very restricted form corresponding to the languages $L_v$, that is, a
sequence of states $q_0, \ldots, q_{|v|}$, with $q_0$ initial and $q_{|v|}$
final, with transitions from $q_i$ to $q_{i+1}$ labeled either by one letter or
by all letters of~$\Sigma$, and with self-loops on $q_0$ and $q_{|v|}$ labeled
by all letters of~$\Sigma$.

In this section, we study a different kind of restriction that one
can impose on NFAs. Namely, we study \emph{$k$-ambiguous} NFAs for fixed $k>0$,
which are those NFAs $A$ satisfying the following property: on every word $w \in
\Sigma^*$, the number of accepting runs of~$A$ on~$w$ is at most~$k$. In
particular, the 1-ambiguous NFAs are precisely the UFAs, and $k$-ambiguous NFAs
can be efficiently converted to $k$-ambiguous linear CFGs.

Remember that we have studied $k$-ambiguous linear CFGs in \cref{sec:ambiguity}, and
showed that while probabilistic membership is tractable in combined
complexity for uCFGs (and hence for UFAs), it is \#P-hard (even in data complexity) already for
2-ambiguous linear CFLs. We now show that, in combined complexity and with NFAs,
the situation is very different, as the probabilistic membership problem is
tractable in combined complexity on $k$-ambiguous NFAs for any constant $k>0$.

\begin{proposition}
  \label{prp:knfa}
  For any fixed integer $k > 0$,
  given a probabilistic word $p$ and a $k$-ambiguous NFA $A$, we can compute
  $p(\LL(A))$ in time $O(|A|^{4k} |p|)$.
\end{proposition}

Note that, in this result, we make the assumption that the NFA that we are
given must be $k$-ambiguous.
However, this assumption can be tractably checked, as there are algorithms for any fixed
$k>0$ to detect in polynomial time whether an input NFA is
$k$-ambiguous~\cite{stearns1985equivalence}; by contrast the problem is
PSPACE-complete when $k$ is not fixed~\cite{chan1983finite},
see~\cite{weber1991degree}.

We prove \cref{prp:knfa} in the rest of this section. The result heavily
relies on techniques from \cite{raha2018universality}, itself
inspired from~\cite{stearns1985equivalence}, to decide the \emph{universality
problem} on
$k$-ambiguous NFAs. We note that, interestingly, the algorithm that we give
will compute its answer probability by direct numeric computation and \emph{not} using tractable
circuits: this contrasts will all other tractability results given in the
paper, which can be obtained via circuits. The reason is the apparent need to do
arithmetic computations over the
probabilities manipulated by the algorithm with coefficients given by integers;
we do not
see a set interpretation for these computations. Thus, we leave open the question of whether \cref{prp:knfa} could be in fact shown by computing in polynomial time a tractable circuit and then invoking \cref{prp:circuittract2}.

Let us start our proof of \cref{prp:knfa}. The key idea is to partition the
language $\LL(A)$ of the input NFA $A$ as:
\begin{equation}
  \label{eqn:part}
    \LL(A) = \LL_1(A) \uplus \cdots \uplus \LL_k(A)
  \end{equation}
where $\LL_i(A)$ for $1 \leq i \leq k$ is the language of the words of $\LL(A)$
having \emph{exactly} $i$ accepting runs in $A$. We will show that $p(\LL_i(A))$
can be efficiently computed for each $1 \leq i \leq k$, which suffices to
conclude as we can sum the probabilities.

To this end, we will show that we can construct from $A$ a family of automata
$A_1, \ldots, A_k$, with each $A_i$ accepting $\biguplus_{i \leq j \leq k}
\LL_j(A)$ with a specific number of accepting runs:

\begin{lemma}
  \label{lem:automrun}
  For any fixed $k > 0$,
  given a $k$-NFA $A$ and $1 \leq i \leq k$, we can compute in time 
  $O(A^{2k})$
  an NFA $A_i$ that
  accepts precisely the words having at least $i$ accepting runs in~$A$, i.e.,
  $\LL(A_i) = \biguplus_{i \leq j \leq k} \LL_j(A)$, and ensures that for
  each $i \leq j \leq k$ every word of $\LL_j(A)$ has precisely 
  $j!/(j-i)!$
  accepting runs in~$A_i$.
\end{lemma}

\begin{proof}
  Let $A=(Q,q_0,\delta,F)$ be the input NFA, and $i \in
  \{1,\ldots,k\}$. Intuitively, we use a kind of powerset construction
  restricted to subsets of size at most $i$ in order to track an ordered
  tuple of $i$ different runs of $A$. We first describe the construction of
  $A_i$ and then explain why it satisfies the required properties.

  We write $[i]$ to mean the set $\{1, \ldots, i\}$, whose elements are called
  \emph{positions}: these will intuitively
  refer to the $i$ runs of an ordered $i$-tuples of runs of~$A$. The states of $A_i$ are the
  pairs $(\equiv, \phi)$ where $\equiv$ is an equivalence relation on the
  set $[i]$ of positions and
  $\phi$ is a function from $[i]/\equiv$ to $Q$. Intuitively, the
  equivalence relation $\equiv$ indicates which positions contain runs that
  are still equal, and which positions contain runs that are known to be
  different
  (even though two runs that have been witnessed to be different in the past may
  be currently located in the same state, i.e., the function $\phi$ is not
  necessarily injective). We will call the equivalence classes
  of $\equiv$ a \emph{batch}.

  The initial state of~$A_i$ is $(\equiv_0, \phi_0)$, where $\equiv_0$ is the
  single-batch equivalence relation that identifies all positions, and where
  $\phi_0$ maps the unique batch to the initial state $q_0$ of~$A$.

  The final states of~$A_i$ are of the form $(\equiv_{\mathrm{f}}, \phi)$, where
  $\equiv_{\mathrm{f}}$ is the equivalence relation with each position
  being in its own singleton batch, and where $\phi\colon
  [i]/\equiv_{\mathrm{f}}\to Q$ is a function such that, for each batch $b \in
  [i]/\equiv_{\mathrm{f}}$, the state $\phi(b)$ is in~$F$, i.e., is a final
  state of~$A$.

  The transitions of~$A_i$ are defined in the following way. Let $(\equiv,
  \phi)$ be the current state, and let $a$ be the letter being read. We pick a
  refinement $\equiv'$ of $\equiv$ (that is, an equivalence relation ensuring
  that whenever $x \equiv' y$ then $x \equiv y$), and a function $\phi' \colon
  [i]/\equiv' \to Q$, subject to the following requirements:
  \begin{itemize}
    \item For each position $x \in [i]$, letting $[x]_\equiv$ be its equivalence
      class in $\equiv$ and $[x]_{\equiv'}$ be its equivalence
      class in $\equiv'$, then there is an $a$-transition from
      $\phi([x]_\equiv)$
      to $\phi'([x]_{\equiv'})$ in~$A$. In other words, when looking at
      positions, they move according to the transitions of~$A$.
    \item For any two positions $x \neq y$ such that  $x
      \equiv y$ (noting that this means $[x]_\equiv =[y]_\equiv$, so that
      $\phi([x]_\equiv) = \phi([y]_\equiv)$), we have that $x \equiv' y$ if and only if  
      $\phi([x]_{\equiv'}) = \phi([y]_{\equiv'})$. In other words, two positions of the same batch become distinguished whenever 
     they become 
      mapped to different states, and otherwise they stay in the same batch.
  \end{itemize}
  If these requirements are satisfied, then we have an $a$-transition 
  from~$(\equiv, \phi)$ to $(\equiv', \phi')$ in~$A_i$. Note that, in
  particular, if some state of~$A$ in the image of~$\phi$ has no outgoing
  $a$-transition, then $(\equiv, \phi)$ will not have any outgoing
  $a$-transition.

  An equivalent way to see the definition above for the $a$-transitions 
  going out of $(\equiv, \phi)$ is
  to define them as follows: we define $\equiv'$ by picking for each
  batch $b \in [i]/\equiv$ an equivalence relation $\equiv_b$ on~$b$,
  and we choose pairwise distinct states to define the $\phi'(b')$ for each $b'
  \in b/\equiv_b$ among the states having an $a$-transition
  from~$\phi(b)$.

  Let us bound the complexity of this construction. The number of equivalence
  relations $\equiv$ over $[i]$ is constant and, for each $\equiv$, the number of possible functions $\phi$
  is $|Q|^k$.  For each state $(\equiv, \phi)$, when building outgoing
  $a$-transitions, we consider every refinement of~$\equiv$ (of which
  there are a constant number), we read the outgoing $a$-transitions from the
  images $\phi(b)$ (an
  additive term of $O(|A|)$ when counted across all letters $a$), and we pick some choice of
  function $\phi'$ among $O(|Q|^k)$ possibilities. Hence, the overall complexity
  is indeed bounded by $O(|A|^{2k})$.

  Let us now argue that the construction is correct, i.e., that $A_i$ satisfies
  the requirements. To do this, we will show that for every word $w$ there is a
  bijection $\eta_w$ from the ordered $i$-tuples
  of (partial) runs of~$A$ on~$w$ to the (partial) runs of~$A_i$ on~$w$. We will
  next argue that this bijection preserves acceptance of the runs, which will
  allow us to conclude.

  For every word $w$, we define the function
  $\eta_w$ that maps an ordered $i$-tuple $(\rho_1, \ldots, \rho_i)$ of partial
  runs of $A$ on~$w$ to the sequence $(\equiv_0, \phi_0), \ldots, (\equiv_{|w|},
  \phi_{|w|})$ of states of $A_i$ defined as follows: for each $0
  \leq j \leq |w|$, we let $(\rho_1^j, \ldots, \rho_i^j)$ be the ordered
  $i$-tuple of runs of~$A$ on the prefix of length~$j$ of~$w$ obtained
  with each $\rho^j_x$ being the prefix of $\rho_x$ of length~$j$, we let $\equiv_j$ be
  the equivalence relation on~$x,y\in[i]$ defined by setting $x \equiv_j y$
  iff $\rho_x^j = \rho_y^j$, and we let $\phi_j$ be the function from
  $[i]/\equiv_j$ to $Q$ obtained by mapping each batch $b \in [i]/\equiv_j$ to
  the common final state of the runs $\rho_x^j$ for $x \in b$ (this state is
  common because those runs are equal).

  We first claim that, indeed, the image of $\eta_w$ consists of runs of~$A$
  on~$w$. This is shown inductively by verifying that, by our definitions, there
  is indeed an $a$-transition from $(\equiv_{j-1}, \phi_{j-1})$ to
  $(\equiv_j, \phi_j)$ for every $0 < j \leq |w|$. Indeed, for each batch $b$ of
  $\equiv_{j-1}$, to go from $(\rho_1^{j-1}, \ldots, \rho_i^{j-1})$, we are
  indeed choosing a refinement by determining which of the runs at positions
  of~$b$ will now be distinguished by the next transition that they do,
  amounting to picking different choices of outgoing $a$-transitions
  from $\phi_{j-1}(b)$.

  We next claim that the function $\eta_w$ is indeed a bijection by exhibiting
  an inverse. Consider a run $(\equiv_0, \phi_0), \ldots, (\equiv_{|w|},
  \phi_{|w|})$ of $A_i$, we define an ordered $i$-tuple $(\rho_1, \ldots, \rho_i)$ of
  runs of~$A$ on~$w$ by defining for $x \in [i]$ the $x$-th run
  $\rho_x$ to be sequence of states $\phi_j([x]_{\equiv_j})$ for $0 \leq j \leq
  |w|$.
  It is easy to see that this defines an inverse to
  the function $\eta_w$, i.e., that applying $\eta_w$ to an ordered $i$-tuple
  $(\rho_1, \ldots, \rho_i)$ of runs of~$A$ on~$w$ and then doing the
  operation above recovers the initial tuple $(\rho_1, \ldots, \rho_i)$.

  We last claim that, in particular, for every word~$w$, the bijection $\eta_w$
  satisfies the following: an ordered $i$-tuples of 
  runs of~$A$ on~$w$ consists of runs that are pairwise distinct and all
  accepting if and only if the resulting run of~$A_i$ on~$w$ ends at a
  final state. Indeed, our bijection ensures that the state $(\equiv, \phi)$
  reached in~$A_i$ is final precisely when all runs are
  different and when the images by~$\phi$ of every singleton batch are all
  mapped to accepting states; and indeed $\phi$ maps each batch to the last state of the
  corresponding run.

  We thus know that $A_i$ accepts precisely the words on which $A$ has at least
  $i$ distinct accepting runs, i.e., the words of $\LL_j(A)$ for some $i \leq j
  \leq k$ thanks to the fact that $A$ is a $k$-NFA (so it has at most $k$
  accepting runs on any given word). Further, on a word $w \in \LL_j(A)$, we know
  that $A$ has precisely $j$ accepting runs, which implies that $A_i$ has
  precisely $j!/(j-i)!$ accepting runs. Indeed, an ordered tuple of $i$ pairwise
  distinct accepting runs of~$A$ is obtained by picking one of the $j$ accepting runs
  for the first position, one of the $j-1$ remaining runs for the second
  position, and so on until multiplying by $(i-(j-1))$, i.e., $j!/(j-i)!$
  possibilities. This concludes the proof.
\end{proof}

We will also need the following lemma, intuitively saying that we
can compute the total number of runs of length $n$ of an NFA, each run being
weighted by the probability of its underlying word. The use of this lemma is the
reason why we do not see how to prove the result using circuits:

\begin{lemma}
  \label{lem:nfaproba}
  Given an NFA $A$ and a probabilistic word $p = (p_1, \ldots, p_n)$, we can compute in time 
  $O(|A|^2|p|)$ the following quantity:
  \[
    \pi(A, p) \coloneq \sum_{w \in \Sigma^n} p(w) \times \nruns(A, w)
  \]
  where $\nruns(A, w)$ is the number of runs of $A$ on a word $w \in \Sigma^*$
\end{lemma}

\begin{proof}
  Let $A=(Q,q_0,\delta,F)$ be the input NFA.
  For $0 \leq i \leq n$, write $\restr{p}{i}$ to be the probabilistic word
  $(p_1, \ldots, p_i)$: note that for $i=0$ this is just the empty probabilistic
  word defined by $\restr{p}{0}(\epsilon) = 1$.
  We define the quantity $\pi_{q,i}$ for $0 \leq i \leq n$ and $q \in Q$ as 
  the sum, over $w \in \Sigma^i$, of $\restr{p}{i}(w) \times \nruns(A, w, q)$ 
  where
  $\nruns(A, w, q)$ denotes the number of runs of $A$ that read $w$ while
  going from an initial state to~$q$. It is then clear that the value that we
  wish to compute in the lemma is the sum of the $\pi_{q,n}$ for $q \in F$.

  We will show how to compute the quantities $\pi_{q,i}$ by increasing value
  of~$i$, and simultaneously argue inductively that it is correct.
  Initially, we set $\pi_{q,0}$ to be $1$ if $q=q_0$ and $0$ otherwise;
  this clearly satisfies the invariant. 

  For $1 \leq i \leq n$, we have that $\pi_{q,i} = \sum_{a \in \Sigma} p_i(a)
  \times \left( \sum_{q' \in Q, q' \rightarrow^a q} \pi_{q',i-1} \right)$ where $q'
  \rightarrow^a q$ denotes that there is an $a$-transition from~$q'$ to~$q$. The
  reason why this is correct is that we can split the sum over $w \in \Sigma^i$
  that defines $\pi_{q,i}$ according to the last letter of~$w$, and for each
  such letter $a \in \Sigma$ the runs over words ending with~$a$ can be
  partitioned according to the state that they reach before reading the~$a$.  Now, remembering that we consider arithmetic operations to take unit time, we indeed end up with complexity $O(|A|^2|p|)$.
\end{proof}

With this in place, we are ready to prove \cref{prp:knfa}:

\begin{proof}[Proof of \cref{prp:knfa}]
Given $A$, we first compute the automata $A_1, \ldots, A_k$ given by
\cref{lem:automrun} in time $O(|A|^{2k})$ (remember that $k$ is a fixed constant), each of which is of size $O(|A|^{2k})$. Then,
given the probabilistic word $p = p_1, \ldots, p_n$, for each $1 \leq i \leq k$
we compute $\pi(A_i, p)$ using \cref{lem:nfaproba}. This takes time $O(|A|^{4k} |p|)$. We now use \cref{eqn:part}:
we know by
\cref{lem:automrun} 
that $\nruns(A_i, w)$ for $w \in \LL_j(A)$ is precisely 
$j!/(j-i)!$
if $j \geq i$ and $0$ otherwise. Hence, we know that:
\[
  \pi(A_i, p) = \sum_{i \leq j \leq k} 
  (j!/(j-i)!)
  p(\LL_j(A))
\]
We can write this as a system of linear equations relating the values
$\pi(A_i,p)$, which we have computed, and the values $p(\LL_j(A))$, which we
want to compute. The matrix of this equation system is a triangular matrix 
$M = \left(j!/(j-i)!\right)_{i,j}$
of size $k\times k$, so we can invert it in constant time (remembering again that $k$ is fixed) to recover the values $p(L_i(A))$, and then the value $p(\LL(A)) = \sum_{1\leq i \leq k} p(L_i(A))$. 
This concludes the proof of \cref{prp:knfa}.
\end{proof}

Beyond the open question of proving \cref{prp:knfa} via tractable circuits, we
conjecture that similar results would hold for
the generalization of probabilistic membership in
$k$-ambiguous nondeterministic tree automata running over probabilistic trees
(i.e., with a fixed skeleton and probabilistic node labels); see
also~\cite{seidl1990deciding} which studies the equivalence problem of such
automata. By contrast, the fact that tractability does not hold for 2-ambiguous
linear CFLs seems to be caused by the two possible parse trees
having different shapes.

\section{Conclusion and Future Work}
\label{sec:conclusion}
We have studied the membership problem for probabilistic
words to CFLs~$L$, which generalizes the problems of counting how many words of a given
length are in~$L$, or how many partial word completions are in~$L$. We have
shown that the problem is tractable for unambiguous CFLs, for
poly-slicewise-unambiguous languages, and for languages admitting tractable
circuits with (decomposable) Cartesian product, (deterministic) disjoint union,
and negation. We have shown that the problem is intractable already for unions
of two linear uCFLs, or for some languages recognized by restricted kinds of
counter automata.

We do not give a complete dichotomy between tractable and intractable
CFLs for
$\prob$. This is not so surprising in hindsight:
via the technique of Greibach~\cite{greibach1968note} we can
easily show the undecidability of the \emph{meta-problem}. Namely, given a CFG
$G$, it is conditionally undecidable to determine whether $\probone{\LL(G)}$ is tractable,
in fact already for linear CFGs:

\begin{proposition}
  \label{prp:meta}
  Assume that $\text{FP} \neq \#$P.
  Then the following problem is undecidable:
  given a linear CFG $\Gamma$, determine whether $\probone{\LL(\Gamma)}$ is in
  FP.
\end{proposition}
\begin{proof}
  We show \cref{prp:meta} by reducing from the universality problem
  for linear CFGs. Recall that this is the problem of determining whether 
  an input linear CFG $\Gamma$ on an alphabet $\Sigma$
  is universal, that is, whether $\LL(\Gamma) =
  \Sigma^*$. This problem is shown undecidable in~\cite{baker1974reversal}.

  We now explain our reduction.
Let $\Gamma$ 
  be the input linear CFL
  on an alphabet $\Sigma$, for which we want to decide universality. 
  Let $\Delta$ be a linear CFL for the hard language $L_0$ of \cref{prp:hard1},
  for which $\probone{L_0}$ is \#P-hard, which is on the alphabet $\Theta = \{0,
  1, \#\}$.
  Up to renaming, we assume without loss of generality
  that $\Sigma$ and $\Theta$ are disjoint.
  We now consider the language $L \coloneq \Sigma^* L_0 \cup \LL(\Gamma) \Theta^*$. We
  can easily construct a linear CFG $\Xi$ that recognizes~$L$: we achieve the
  first member of the union by a linear CFG built from~$\Delta$ by generating an
  arbitrary word of~$\Sigma^*$ to the left before generating a word
  of~$\LL(\Delta)$, and do the analogous construction for the second member of the union.

  We now claim that if $\Gamma$ is universal then $L$ is regular so that
  $\probone{L}$ is in FP, and if~$\Gamma$ is not universal then $\probone{L}$ is
  \#P-hard. Thus, under our assumption, this establishes that it is undecidable
  to determine whether $\probone{L}$ is in FP.

  First, if $\Gamma$ is universal, then $L$ simplifies to $\Sigma^* \Theta^*$.
  This is a regular language, hence it is a uCFL, so that $\probone{L}$ is then in FP by
  \cref{prp:cyk}.

  Second, if $\Gamma$ is not universal, we will show  how $\probone{L_0}$
  reduces in PTIME to $\probone{L}$, 
  establishing
  that $\probone{L}$ is \#P-hard, so it is not in FP because we assumed
  $\text{FP} \neq \#\text{P}$. Pick an arbitrary word $w \notin
  \LL(\Gamma)$. (Note that we do not care about the decidability of picking this
  constant word~$w$, as we are only showing that a reduction exists, not
  discussing the complexity of devising the reduction.)
  Given an input probabilistic word $p$ for $\probone{L_0}$, build the
  probabilistic word $w
  p$. This is in linear time, again because $w$ has constant size with respect
   to the
  input $p$. Now, we claim that the answer to $\probone{L_0}$ on~$p$ is the same
  as that of $\probone{L}$ on~$wp$, that is, an outcome $u$ of~$p$ is in~$L_0$
  iff the corresponding outcome $wu$ of~$wp$ (which has the same probability) is
  in~$L$. Now, from the definition of~$L$ and because the alphabets $\Sigma$ and
  $\Theta$ are disjoint, we know that the left residual
  $w^{-1} L = \{u \in (\Sigma \cup \Theta)^* \mid wu \in L\}$ is precisely
  $L_0$. Hence, $wu \in L$ iff $u \in L_0$, which is what we wanted to
  establish. Thus, the reduction is correct, which concludes
  the proof.
\end{proof}

In fact, the same proof shows that it is (conditionally) undecidable to decide,
given a CFL~$G$, whether $G$ is unambiguous,
poly-slicewise-unambiguous, or whether it admits tractable circuits (with or
without complement). Indeed, in the proof of the result above, in one case the
language to which we reduce is regular so it falls in all these classes, and in
the other case the language is intractable for $\prob$ so conditionally not in these
classes. Thus, it is an interesting question whether showing \#P-hardness of
$\prob$ can be a useful method to show (conditionally) that a language is
inherently ambiguous (compared to other methods~\cite{koechlin2022new}), or that
it is not poly-slicewise-unambiguous (compared to~\cite{mengel2025lower}), or
that it does not have tractable circuits.

Despite \cref{prp:meta}, one natural question
for future work is to classify more CFLs. Does the tractability
of $\pal^2$ extend to greater values of the \emph{palindromic length}
\cite{frid2013palindromic,borchert2015words}, e.g., $\pal^3$, or more generally
\emph{palstars}~\cite{knuth1977fast}? We can also see the languages $L_1$ of
\cref{prp:counthard} or $L_0'$ of \cref{prp:hard1b} as variants of palindrome
concatenations, with ``markers'' (the \#'s) and ``gaps'' (the $\Sigma^*$'s). Does
intractability still hold without these features? What about the language on
$\{0,1,\#\}$ defined as
$\{\Sigma^* \# u \# \Sigma^* \# u^R \# \Sigma^* \mid u \in \{0,1\}^*\}$? We
also do not know how tractability is affected if we instead study the restricted
case of $\prob$ corresponding to the counting of completions of partial words.
Another intriguing example is
\emph{Shamir's language}
\cite{koechlin2022new,shamir1971some}: $\prob$ for this language
amounts to computing, given two
probabilistic words $p$ and $p'$, the total probability of the outcomes $u,u'$
such that $u$ is a factor of~$u'$.

In the setting of counter automata, it is natural to wonder whether the
unambiguity requirement on counter automata is necessary to ensure that
probabilistic membership is tractable in data complexity. We have illustrated
some cases of ambiguous counter automata where the problem is intractable, but
it could be the case that tractability also holds in other setting. One
especially interesting question concerns vector addition systems with states
(VASS)~\cite{hopcroft1979reachability} with one counter, also called \emph{one-counter
nets}~\cite{hofman2016simulation}: for such counter automata, we conjecture that
probabilistic membership is tractable even in the presence of nondeterminism, by
a dynamic algorithm that keeps track of the probability of each function mapping
states to the maximal reachable counter value in a run that ends at this state.
We expect that this is in contrast with the intractability of the problem for
nondeterministic pushdown automata with unary stack alphabet or with
1-dimensional Parikh automata (see the end of \cref{sec:combined}); and expect
that tractability is also lost in the presence of more than one counter.
However, we do not know how this tentative tractable case fits in the
broader picture of our results.

In the setting of combined complexity, it remains a challenging question to
understand which restrictions on NFAs (or, indeed, on grammars) suffice to
ensure the tractability of the probabilistic membership problem in combined
complexity. In the setting of partial pattern matching, we leave open the
question of whether it is \#P-hard  to count the number of words
of length $n$ (given in unary) over the alphabet $\Sigma \coloneq \{0, 1\}$
having some factor matching a partial word $v$ over~$\Sigma$ given as input, in
particular in the case where $v$ only uses one letter (say $1$) and the
wildcard $?$ (remembering that \cite{ma2007complexity} shows this is NP-hard).

A broader direction is to understand which CFLs admit
tractable circuits in various circuit classes. For instance, when can we have
tractable \emph{decision diagrams} (e.g., OBDDs), or tractable \emph{structured
circuits}~\cite{pipatsrisawat2008new}? This relates to a line of work in 
probabilistic databases that asks which queries admit
tractable circuits in various representations~\cite{jha2013knowledge}, and more
broadly asks about the relative power of
circuit formalisms. One obvious question is whether
$\times,\uplus,\compl$-circuits capture the tractability of~$\prob$ for more
languages than $\times,\uplus$-circuits (i.e., whether they are more concise):
this relates to the open problem of whether d-DNNFs are closed under
complementation (see~\cite{darwiche2002knowledge,monet2020solving,vinall2024structured}).
Another question is whether there are CFLs that are tractable for $\prob$ but do
not admit tractable circuits: this relates to the question of whether such
circuits can express \emph{inclusion-exclusion}~\cite{amarilli2024non}, and
whether they can express the arithmetic computations used in the proof of 
\cref{prp:knfa}.

Finally, a much broader question 
the relationship of $\prob$
to non-probabilistic membership. Indeed, a central question in 
CFLs is to characterize the fine-grained (data) complexity of the (non-probabilistic)
membership problem for each specific
CFL $L$, i.e., the best achievable complexity in the input word. Lower bounds
are known for some grammars~\cite{abboud2018if}, and the ability to code
hard problems in $\probone{L}$ for a language~$L$ might relate to the ability to
code hard problems for parsing; so classifying the tractable CFLs
for $\prob$ may help with understanding the complexity of (non-probabilistic)
CFG parsing.

\section*{Acknowledgments}
  \noindent We are grateful to Arthur Ledaguenel for initial discussions
leading to the formulation of the problem; to Alexis de Colnet for early
discussions; to Louis Jachiet for remarking that the problem is tractable for
uCFLs and for pointing out an issue and a fix in the proof of
\cref{prp:pattern}; to Laurent Noé for pointers to related work; and to Charles Paperman and Michaël Cadilhac for advice on automata
models. We are also grateful to the anonymous referees of the conference version
of this paper for their useful feedback.

\bibliographystyle{plainurl}
\bibliography{main}

\end{document}